\newtheorem{Theorem}{Theorem}
\newtheorem{Lemma}{Lemma}
\begin{document}

\title{Two New Families of Quantum Synchronizable Codes%\thanks{Grants or other notes
%about the article that should go on the front page should be
%placed here. General acknowledgments should be placed at the end of the article.}
}
%\subtitle{Do you have a subtitle?\\ If so, write it here}

%\titlerunning{Short form of title}        % if too long for running head

\author{Lan Luo         \and
        Zhi Ma  \and
        Dongdai Lin %etc.
}

%\authorrunning{Short form of author list} % if too long for running head

\institute{L. Lan \at
              Henan Key Laboratory of Network Cryptography Technology, Zhengzhou 450001, China\\
              State Key Laboratory of Information Security, Beijing 100093, China
              %Tel.: +123-45-678910\\
              %Fax: +123-45-678910\\
                        %  \\
%             \emph{Present address:} of F. Author  %  if needed
           \and
           Z. Ma \at
              Henan Key Laboratory of Network Cryptography Technology, Zhengzhou 450001, China\\
              \email{ma\_zhi@163.com}
           \and
           D. D. Lin\at
            State Key Laboratory of Information Security, Beijing 100093, China
}

\date{Received: date / Accepted: date}
% The correct dates will be entered by the editor

\maketitle

\begin{abstract}
In this paper, we present two new ways of quantum synchronization coding based on the $(\bm{u}+\bm{v}|\bm{u}-\bm{v})$ construction and the product construction respectively, and greatly enrich the varieties of available quantum synchronizable codes. The circumstances where the maximum synchronization error tolerance can be reached are explained for both constructions. Furthermore, repeated-root cyclic codes derived from the $(\bm{u}+\bm{v}|\bm{u}-\bm{v})$ construction are shown to be able to provide better Pauli error-correcting capability than BCH codes.

\keywords{Quantum synchronizable codes \and $(\bm{u}+\bm{v}|\bm{u}-\bm{v})$ construction \and Product construction \and Minimum distances}
% \PACS{PACS code1 \and PACS code2 \and more}
% \subclass{MSC code1 \and MSC code2 \and more}
\end{abstract}

\section{\label{sec:level1}Introduction}

Block synchronization (or frame synchronization) is a critical problem in virtually any area in classical digital communications to ensure that the information transmitted can be correctly decoded by the receiver.  To achieve this goal, existing classical synchronization techniques commonly require that the information receiver or processing device constantly monitors the data to exactly identify the inserted boundary signals of an information block (see Refs.~\cite{Sklar2001Digital,Bregni2002Synchronization} for the basics of block synchronization techniques in classical digital communications).
Quantum block synchronization is also significant because the block structure is typically used in  quantum information coding~\cite{Nielsen2010Quantum,Lidar2013Quantum} as in classical domain and procedures for manipulating it demand precise alignment~\cite{Fujiwara2013High,Polyanskiy2013Asynchronous,Fujiwara2013Block}. However, since measurement of qubits usually destroys their contained quantum information, quantum analogues of above methods don't apply.

Aiming at this problem, Fujiwara~\cite{Fujiwara2013Block} proposed a solution---quantum synchronizable error-correcting codes, which allow us to eliminate the effects caused by block misalignment and Pauli errors. In his scheme, the construction of good quantum synchronizable codes demands a pair of nested dual-containing cyclic codes, both of which guarantee large minimum distances. Later, authors of Ref.~\cite{Fujiwara2013Algebraic} improved the original result by widening the range of tolerable magnitude of misalignment and presented several quantum synchronizable codes from classical BCH codes and punctured Reed-Muller (RM) codes. After that,  finite geometric codes~\cite{Fujiwara2014Quantum}, quadratic residue codes~\cite{Xie2014Quantum}, duadic codes~\cite{Guenda2015Algebraic} and repeated-root codes~\cite{xie2016Q,Lan2018Non} etc., were  shown to be applicable in synchronization coding.
However, apart from the case with repeated-root cyclic codes,  code parameters of other available quantum synchronizable codes are strongly limited~\cite{Lan2018Non}.
 Besides, the difficulty in computing the exact minimum distances of cyclic codes keeps us away from an accurate estimate on the error-correcting capability against Pauli errors of the obtained quantum codes.

In this work, we provide two new ways of constructing  quantum synchronizable codes. The first method exploits the well-known $(\bm{u}+\bm{v}|\bm{u}-\bm{v})$ construction
on cyclic codes and negacyclic codes  to generate new cyclic codes  with twice the lengths.
Two circumstances where the obtained quantum codes can achieve the maximum synchronization error tolerance are provided. In particular, repeated-root cyclic codes are shown
to be
able to provide better performance in correcting Pauli errors than non-primitive, narrow-sense BCH codes. The second method exploits the product construction to produce new cyclic codes from two cyclic codes with coprime lengths. With a broad range of  cyclic codes as ingredients, the varieties of quantum synchronizable codes are greatly extended  using cyclic product codes. Furthermore, the obtained codes can also reach the best attainable tolerance against misalignment under certain circumstances.

The rest of this paper is organized as follows: First we describe the general formalism of quantum synchronization coding in Section 2. Then we build quantum synchronizable codes based upon the $(\bm{u}+\bm{v}|\bm{u}-\bm{v})$ construction in Section 3.1. Two circumstances where the obtained quantum codes reach the maximum synchronization error tolerance are elaborated with repeated-root codes in Section 3.2. Afterwards, we discuss the minimum distances of above repeated-root  codes in Section 3.3. In Section 4, we produce quantum synchronizable codes from cyclic product codes. Finally, we give concluding remarks in Section 5.

\section{\label{sec:level1}Preliminaries}

In this section, we give a brief review of quantum synchronization coding. To start with, we describe some basic facts in classical and quantum coding theory. For further details, the readers can refer to Refs.~\cite{Nielsen2010Quantum,Huffman2010Fundamentals}.

Let $\mathbb{F}_{q}$ be a finite field where $q=p^m$ is a prime power. A classical $[n,k,d]$ linear code $C$ over $\mathbb{F}_{q}$ is a $k$-dimensional subspace of $\mathbb{F}_{q}^{n}$ such that $\text{min}\{\text{wt}(\bm{c}):\bm{c}\in C\backslash\{\bm{0}\}\}=d$, where  $\text{wt}(\bm{c})$ denotes the number of non-zero coordinates of a codeword $\bm{c}$.
$C$ can be determined as the null space of an $(n-k)\times k$ parity-check matrix $H$, i.e., $C=\{\bm{c}\in\mathbb{F}_{q}^{n}: H\bm{c}^{\mathsf{T}}=\bm{0}\}$. Accordingly, there exists a $k\times n$ generator matrix $G$ with its row space corresponding to $C$ such that $HG^{\mathsf{T}}=\bm{0}$. The dual code $C^{\perp}:=\{\bm{c}'\in\mathbb{F}_{q}^{n}:\bm{c}\bm{c}'^{\mathsf{T}}=0,\forall \bm{c}\in C\}$ is an $[n,n-k]$ code with a generator matrix $H$ and a parity-check matrix $G$. If $C\subset C^{\perp}$, we call $C$ a self-orthogonal code. Otherwise if $C^{\perp}\subset C$, $C$ is a dual-containing code.

A classical linear code $C$ is (nega)cyclic if it remains unchanged under a (nega)cyclic shifting of the coordinates, i.e., for a codeword $\bm{c}=(c_{0},c_{1},\dots,c_{n-1})\in C$, a cyclic shift $(c_{n-1},c_{0},\dots,c_{n-2})$ (a negacyclic shift $(-c_{n-1},c_{0},\dots,c_{n-2})$) is also a codeword of $C$. Especially, repeated-root (nega)cyclic codes~\cite{Dinh2008On,Chen2014Repeated,Dinh2013Structure,Chen2015Repeated,Ozadam2009The,Zeh2015Decoding} are those whose lengths are divisible by the characteristic $p$ of $\mathbb{F}_{q}$.
 Identify each codeword $\bm{c}$ as the coefficient vector of a polynomial $c(x)=\sum_{i=0}^{n-1}c_{i}x^{i}$. Then an $[n,k]$ (nega)cyclic code $C$ is equivalent with an ideal $\langle g(x)\rangle$ in the quotient ring $\frac{\mathbb{F}_{q}[x]}{\langle x^{n}-1\rangle}$ ($\frac{\mathbb{F}_{q}[x]}{\langle x^n+1\rangle}$). We call the monic polynomial $g(x)$ of degree $n-k$ as the generator polynomial of $C$.
If the value $\frac{q-1}{\text{gcd}(n,q-1)}$ is even,  there exists an isomorphism $\phi$ between the quotient rings $\frac{\mathbb{F}_{q}[x]}{\langle x^n-1\rangle}$ and $\frac{\mathbb{F}_{q}[x]}{\langle x^{n}+1\rangle}$ that maps $g(x)$ to $g(\lambda x)$~\cite{Chen2014Repeated}, where $\lambda^{n}=-1$.
Define the parity-check polynomial $h(x)$ of a (nega)cyclic code $C$ as $h(x)=\frac{x^n-1}{g(x)}$ ($h(x)=\frac{x^n+1}{g(x)}$).  Accordingly, the dual (nega)cyclic  code $C^{\perp}$ has a generator polynomial $g^{\perp}(x)=h^{*}(x)$, where $h^{*}(x)=h_{0}^{-1}x^{\text{deg}(h(x))}h(\frac{1}{x})$ represents the monic reciprocal polynomial of $h(x)$.

An $[[n,k,d]]$ quantum code $\mathcal{Q}$  is a $q^k$-dimensional subspace of a $q^n$-dimensional Hilbert space $\mathbb({C}^{q})^{\otimes n}$. Typically, $Q$ is designed to correct the errors caused by Pauli operators $X_{\bm{\alpha}}Z_{\bm{\beta}}$ of weight less than $\lfloor\frac{d-1}{2}\rfloor$, where $\bm{\alpha},\bm{\beta}\in\mathbb{F}_{q}^{n}$. An $(a_{l},a_{r})-[[n,k]]$ quantum synchronizable code is a quantum code that corrects not only Pauli errors, but also block misalignment to the left by $a_{l}$ qudits ($q$-ary quantum systems) and to the right by $a_{r}$ qudits for some non-negative integers $a_{l}$ and $a_{r}$. Denote the order of a polynomial $f(x)$ with $f(0)\not=0$ by $\text{ord}(f(x))$, i.e., $\text{ord}(f(x))=|\{x^a(\text{mod } f(x)):\ a\in \mathbb{N}\}|$. We give the quantum  synchronization coding framework as follows.

\begin{Theorem}~\cite{xie2016Q,Lan2018Non}
Let $C$ be a dual-containing $[n,k_{C},d_{C}]$ cyclic code and $D$ be an  $[n,k_{D},d_{D}]$ cyclic code such that $C\subset D$. Denote by $g_{C}(x)$ and $g_{D}(x)$ the generator polynomials of $C$ and $D$ respectively. Define the polynomial $f(x)$ of degree $k_{D}-k_{C}$ to be the quotient of $g_{C}(x)$ divided by $g_{D}(x)$. Then for any pair $a_{l},a_{r}$ of non-negative integers satisfying $a_{l}+a_{r}<\emph{ord}(f(x))$, there exists an  $(a_{l},a_{r})-[[n,2k_{C}-n]]$ quantum  synchronizable code that can correct up to $\lfloor\frac{d_{D}-1}{2}\rfloor$ bit errors and $\lfloor \frac{d_{C}-1}{2}\rfloor$ phase errors.
\label{thm1}
\end{Theorem}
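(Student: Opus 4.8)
\emph{Proof plan.}\quad The plan is to realize the code as a CSS-type stabilizer code on $n$ qudits and to equip it with a two-stage recovery map: first estimate the amount of block misalignment from a distinguished part of the syndrome, undo it, and only then perform ordinary CSS Pauli correction on what remains. For the algebra, note that because $C$ is dual-containing and $C\subseteq D$, the code $D$ is dual-containing as well ($D^{\perp}\subseteq C^{\perp}\subseteq C\subseteq D$), and in $\mathbb{F}_q[x]/\langle x^{n}-1\rangle$ one has the divisibility chain $g_D(x)\mid g_C(x)\mid h_C^{*}(x)$ with $g_C(x)=f(x)g_D(x)$ and $\deg f(x)=k_D-k_C$. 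Using $C^{\perp}\subseteq C$, the CSS code $\mathcal{Q}$ whose $X$- and $Z$-stabilizer groups are both generated by the $n-k_C$ cyclic shifts of (the coefficient vector of) $h_C^{*}(x)$ is well defined, it has parameters $[[n,2k_C-n]]$, and with the plain decoder it corrects bit and phase errors each up to $\lfloor(d_C-1)/2\rfloor$. The auxiliary code $D$ enters not through the dimension but through a splitting of the bit-error syndrome space: a parity-check matrix $H_C$ of $C$ can be taken in the block form $\bigl(\begin{smallmatrix}H_D\\ H_{f}\end{smallmatrix}\bigr)$, where $H_D$ is a parity-check matrix of $D$ and $H_{f}$ consists of $k_D-k_C=\deg f(x)$ further rows spanning a ``misalignment-tag'' subspace.

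Next I would analyse a block misalignment by $t$ positions, $-a_l\le t\le a_r$. Following the standard windowing (``cyclic-fill'') argument for synchronizable codes, the $n$ qudits captured from a window shifted by $t$ are, up to a fixed known relabelling of coordinates, the intended encoded state acted on by a Pauli supported near the block boundary, together with a frame mismatch that a decoder using the un-shifted stabilizers records as an extra, $t$-dependent contribution to the $Z$-syndrome lying in the tag block $H_{f}$. The decisive claim to verify here is that this contribution equals (the coefficient vector of) $x^{t}\bmod f(x)$. Since $x$ is invertible modulo $f(x)$ (because $f(x)\mid x^{n}-1$ forces $f(0)\ne 0$), the map $t\mapsto x^{t}\bmod f(x)$ has period exactly $\operatorname{ord}(f(x))$ and is therefore injective on any window of $\operatorname{ord}(f(x))$ consecutive integers; the allowed window $\{-a_l,\dots,a_r\}$ has $a_l+a_r+1$ elements, so injectivity --- and with it the ability to tell a left shift from a right shift --- is exactly the hypothesis $a_l+a_r<\operatorname{ord}(f(x))$.

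Finally I would assemble the recovery procedure: measure all stabilizers; from the $H_D$-block of the $Z$-syndrome decode the bit error $\bm{e}$ inside $D$ (unique, hence correct, whenever $\operatorname{wt}(\bm{e})<d_D/2$, since $d(D)=d_D$); subtract the $C$-syndrome $H_C\bm{e}^{\mathsf{T}}$ of the now-known $\bm{e}$ from the full $Z$-syndrome to isolate the tag; read off $t$ by the injectivity above and reverse the cyclic shift and the boundary Pauli; and from the $X$-syndrome decode the phase error inside $C$ (correct whenever its weight is below $d_C/2$). This yields the claimed $(a_l,a_r)$-$[[n,2k_C-n]]$ code with the stated Pauli-error capability. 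The main obstacle is the middle step: pinning down precisely how a length-$t$ window misalignment, after the cyclic-fill reduction, acts on the encoded register, and confirming that its imprint on the $Z$-syndrome is the vector $x^{t}\bmod f(x)$ sitting in the $H_{f}$ block --- equivalently, that the ``shift'' information and a genuine low-weight Pauli error occupy complementary parts of the syndrome, so that the two decoding stages do not interfere. Once that is in place, the remaining pieces (CSS validity, the dimension count, and the two classical minimum-distance arguments) are routine.
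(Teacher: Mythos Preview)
The paper does not prove Theorem~\ref{thm1}; it is quoted from the cited references as the general framework and carries no argument in this paper. There is therefore nothing here to compare your proposal against directly.

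Judged on its own, your outline tracks the standard construction in those references: the CSS code built from $C^{\perp}\subseteq C$ gives the $[[n,2k_{C}-n]]$ parameters; the parity-check matrix of $C$ splits as $\bigl(\begin{smallmatrix}H_{D}\\ H_{f}\end{smallmatrix}\bigr)$; bit errors are decoded in $D$ up to $\lfloor (d_{D}-1)/2\rfloor$; the residue $x^{t}\bmod f(x)$ has period $\operatorname{ord}(f(x))$, so the window condition $a_{l}+a_{r}<\operatorname{ord}(f(x))$ is exactly what makes $t\mapsto x^{t}\bmod f(x)$ injective on $\{-a_{l},\dots,a_{r}\}$; and phase errors are handled by $C$ afterwards. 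All of that is correct.

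There is, however, one genuine missing idea that your ``main obstacle'' paragraph circles around without naming. With the \emph{plain} CSS encoder from $C^{\perp}\subseteq C$, a block shift cannot leave any imprint on the syndrome at all: $C$ and $C^{\perp}$ are cyclic, so a cyclic shift sends code states to code states and acts as a \emph{logical} operator, with zero syndrome in both the $H_{D}$ and $H_{f}$ blocks. The construction in the cited references does not use the plain encoder; it post-composes with an additional unitary determined by the quotient $f(x)=g_{C}(x)/g_{D}(x)$ (equivalently, it fixes coset representatives that deliberately break the cyclic invariance), and it is this twist that makes a shift by $t$ deposit exactly $x^{t}\bmod f(x)$ in the $H_{f}$ block while leaving the $H_{D}$ block to the Pauli bit error. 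Your recovery procedure is the right one, but to make the argument go through you must specify this modified encoding and verify the tag identity against it; otherwise the ``decisive claim'' you isolate is simply false for the encoder you wrote down.
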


 We can tell from Theorem~\ref{thm1} that a valid construction of good quantum synchronizable codes  relies on a pair of  dual-containing cyclic codes, one of which is contained in the other and both  guarantee large minimum distances.
Furthermore, the obtained synchronizable code  can correct synchronization errors (or misalignment) up to $\text{ord}(f(x))$ qubits. If $\text{ord}(f(x))=n$, then
the quantum synchronizable code achieves the maximum  synchronization error tolerance.

\section{\label{sec:level1}The $(\bm{u}+\bm{v}|\bm{u}-\bm{v})$ construction}
\subsection{\label{sec:level2}Synchronization coding}

In this section, we describe the quantum synchronization coding based upon the $(\bm{u}+\bm{v}|\bm{u}-\bm{v})$ construction.
  Compared with the well-known $(\bm{u}|\bm{u}+\bm{v})$ method---an iterative way to define RM codes,
the $(\bm{u}+\bm{v}|\bm{u}-\bm{v})$ technique
  has several advantages~\cite{Hughes2000Constacyclic,Ling2001On}. Apart from an estimate of minimum distances never worse than the other case, the $(\bm{u}+\bm{v}|\bm{u}-\bm{v})$ scheme enables us to obtain a $2n$-length cyclic code from an $n$-length cyclic code and an $n$-length negacyclic code.

To be specific, let $C_{1}$ and $C_{2}$ be $[n,k_{1},d_{1}]$ and $[n,k_{2},d_{2}]$ linear codes over $\mathbb{F}_{q}$ respectively, where $q=p^m$ is an odd prime power. (In this section, we leave the case with $p=2$ out of consideration.) Denote by $G_{1},G_{2}$ and $H_{1},H_{2}$ the generator matrices and parity-check matrices of $C_{1}$ and $C_{2}$, respectively.
The $(\bm{u}+\bm{v}|\bm{u}-\bm{v})$ construction
$C=C_{1}\curlyvee C_{2}=\{(\bm{u}+\bm{v}|\bm{u}-\bm{v}):\bm{u}\in C_{1},\bm{v}\in C_{2}\}$~\cite{Hughes2000Constacyclic,Macwilliams1977The} is a $[2n, k_{1}+k_{2}, \text{min}\{2d_{1},2d_{2},\text{max}\{d_{1},d_{2}\}\}]$ code with a generator matrix
\begin{equation}
G_{C}=\left(
\begin{array}{cc}
G_{1} & G_{1}\\
G_{2}  & -G_{2}
\end{array}
\right).
\end{equation}
%
%The dual code $C^{\perp}$ is also a $(\bm{u}+\bm{v}|\bm{u}-\bm{v})$ construction of $C_{1}^{\perp}$ and $C_{2}^{\perp}$, i.e., $C^{\perp}=(C_{1}\curlyvee C_{2})^{\perp}=C_{1}^{\perp}\curlyvee C_{2}^{\perp}$, with generator matrix
%\begin{equation}
%H=\left(
%\begin{array}{cc}
%H_{1} & H_{1}\\
%H_{2}  & -H_{2}
%\end{array}
%\right).
%\end{equation}
 Suppose that $C_{1}$ is cyclic with a generator polynomial $g_{1}(x)$ and $C_{2}$ is negacyclic with a generator polynomial $g_{2}(x)$, then $C$ is cyclic with a generator polynomial $g(x)=g_1(x)g_2(x)$~\cite{Hughes2000Constacyclic}.
 Clearly, $C$ is dual-containing if both $C_{1}$ and $C_{2}$ are dual-containing.  Applying these properties to Theorem~\ref{thm1}, we can build a family of quantum synchronizable codes from cyclic codes and negacyclic codes as follows.
\begin{Theorem}
Let $C_{i}$ be an $[n,k_{i},d_{i}]$ dual-containing code for $i\in\{1,2,3,4\}$. Suppose that $C_{1},C_{3}$ are cyclic with respective generator polynomial $g_{1}(x),g_{3}(x)$ and $C_{2},C_{4}$ are negacyclic with respective generator polynomial $g_{2}(x), g_{4}(x)$. If $C_{1}\subset C_{3}$ and $C_{2}\subset C_{4}$, define $f(x)=\frac{g_{1}(x)g_{2}(x)}{g_{3}(x)g_{4}(x)}$. Then for any pair of non-negative integers $a_{l},a_{r}$ such that $a_{l}+a_{r}<\emph{ord}(f(x))$, there exists an $(a_{l},a_{r})-[[2n,2(k_{1}+k_{2}-n)]]$ quantum synchronizable code that can correct up to $\lfloor \frac{\emph{min}\{2d_{3},2d_{4},\emph{max}\{d_{3},d_{4}\}\}-1}{2}\rfloor$ bit errors and $\lfloor \frac{\emph{min}\{2d_{1},2d_{2},\emph{max}\{d_{1},d_{2}\}\}-1}{2}\rfloor$ phase errors.
\label{thm2}
\end{Theorem}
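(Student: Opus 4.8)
The plan is to reduce Theorem~\ref{thm2} to Theorem~\ref{thm1} by showing that the pair $(C,D)$, with $C = C_1 \curlyvee C_2$ and $D = C_3 \curlyvee C_4$, satisfies the hypotheses required there. First I would record the structural facts already stated in the excerpt: since $C_1$ is cyclic with generator polynomial $g_1(x)$ and $C_2$ is negacyclic with generator polynomial $g_2(x)$, the $(\bm{u}+\bm{v}|\bm{u}-\bm{v})$ code $C$ is a cyclic code of length $2n$ with generator polynomial $g_1(x)g_2(x)$, and likewise $D$ is cyclic of length $2n$ with generator polynomial $g_3(x)g_4(x)$. The dimension of $C$ is $k_1+k_2$ and that of $D$ is $k_3+k_4$, and the minimum distances follow from the general $[2n,k_1+k_2,\min\{2d_1,2d_2,\max\{d_1,d_2\}\}]$ formula quoted just before the theorem; applying it to $D$ gives $d_D = \min\{2d_3,2d_4,\max\{d_3,d_4\}\}$.

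Next I would verify the two containment/duality conditions of Theorem~\ref{thm1}. The inclusion $C\subset D$ follows from $C_1\subset C_3$ and $C_2\subset C_4$ directly from the definition $C = \{(\bm{u}+\bm{v}\mid \bm{u}-\bm{v}): \bm{u}\in C_1, \bm{v}\in C_2\}$, or equivalently from $g_3(x)g_4(x)\mid g_1(x)g_2(x)$ in $\mathbb{F}_q[x]/\langle x^{2n}-1\rangle$. For the dual-containing property of $C$, I would invoke the fact already asserted in the excerpt that $C=C_1\curlyvee C_2$ is dual-containing whenever $C_1$ and $C_2$ both are; since $C_1,C_2$ are dual-containing by hypothesis, so is $C$. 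With $C$ dual-containing, $D\supset C$ cyclic of the same length $2n$, and $f(x) = g_1(x)g_2(x)/\bigl(g_3(x)g_4(x)\bigr)$ exactly the quotient of the generator polynomial of $C$ by that of $D$, Theorem~\ref{thm1} applies verbatim and yields an $(a_l,a_r)$-$[[2n,\,2(k_1+k_2)-2n]]$ quantum synchronizable code for every $a_l+a_r<\mathrm{ord}(f(x))$, correcting $\lfloor (d_D-1)/2\rfloor$ bit errors and $\lfloor (d_C-1)/2\rfloor$ phase errors. Substituting the two minimum-distance expressions gives precisely the stated error-correction bounds, and $2(k_1+k_2)-2n = 2(k_1+k_2-n)$ gives the stated code length parameter.

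The one point that needs a little care, and which I expect to be the main (though modest) obstacle, is the passage from ``$g_1(x)g_2(x)$ generates a cyclic code of length $2n$'' to the claim that this product really is a divisor of $x^{2n}-1$ and that the quotient $f(x)$ is a genuine polynomial with nonzero constant term so that $\mathrm{ord}(f(x))$ is defined. Here one uses that $x^n-1$ factors the cyclic part, $x^n+1$ factors the negacyclic part, $x^{2n}-1 = (x^n-1)(x^n+1)$, and that over an odd characteristic field $\gcd(x^n-1,x^n+1)=1$, so $g_1(x)\mid x^n-1$ and $g_2(x)\mid x^n+1$ combine to give $g_1(x)g_2(x)\mid x^{2n}-1$; the same reasoning applies to $g_3(x)g_4(x)$, and from $g_3\mid g_1$, $g_4\mid g_2$ one gets that $f(x)=\frac{g_1(x)}{g_3(x)}\cdot\frac{g_2(x)}{g_4(x)}$ is a product of two honest polynomials, each with nonzero constant term because it divides $x^n\mp 1$. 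Once this is in place, the rest is a direct citation of Theorem~\ref{thm1}, so the proof is short; I would present it essentially as the chain of identifications above, flagging only the odd-characteristic hypothesis (already imposed in Section~3.1) as the reason the factorization behaves well.
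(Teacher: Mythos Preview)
Your proposal is correct and follows essentially the same approach as the paper: construct $C=C_{1}\curlyvee C_{2}$ and $D=C_{3}\curlyvee C_{4}$, observe they are cyclic of length $2n$ with generator polynomials $g_{1}(x)g_{2}(x)$ and $g_{3}(x)g_{4}(x)$, note $C\subset D$ and $C$ dual-containing, and invoke Theorem~\ref{thm1}. Your additional care regarding the factorization $x^{2n}-1=(x^n-1)(x^n+1)$ and the well-definedness of $\mathrm{ord}(f(x))$ is more explicit than the paper's proof, which simply cites the structural facts and applies Theorem~\ref{thm1} directly.
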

\begin{proof}
It is clear that $C=C_{1}\curlyvee C_{2}$ is a $[2n,k_{1}+k_{2},\text{min}\{2d_{1},2d_{2},\text{max}\{d_{1},d_{2}\}\}]$ cyclic code with a generator polynomial $g_{C}(x)=g_{1}(x)g_{2}(x)$ and $D=C_{3}\curlyvee C_{4}$ is a $[2n,k_{3}+k_{4},\text{min}\{2d_{3},2d_{4},\text{max}\{d_{3},d_{4}\}\}]$ cyclic code with generator polynomial $g_{D}(x)=g_{3}(x)g_{4}(x)$. Furthermore, the condition $C\subset D$ holds because $C_{1}\subset C_{3}$ and $C_{2}\subset C_{4}$. By applying $C$ and $D$ to Theorem~\ref{thm1}, we can then obtain the required quantum synchronizable codes.
$\hfill\square$
\end{proof}

Different from Theorem~\ref{thm1}, Theorem~\ref{thm2} calls for
 two pairs of nested dual-containing classical codes in quantum synchronization coding, one of which are cyclic and the other are negacyclic.
 All of these codes need to guarantee large minimum distances, and  are desired to make as large $\text{ord}(f(x))$ as possible to offer better synchronization recovery capability. In particular, the maximum tolerable magnitude of misalignment   is $2n$. In that case, the quantum synchronizable codes from Theorem~\ref{thm2} can correct misalignment by up to $a_{l}$ qubits to the left and $a_{r}$ qubits to the right provided that $a_{l}+a_{r}<2n$.

 \subsection{Maximum synchronization error tolerance}

 Under two circumstances could the maximum synchronization error tolerance $2n$ be achieved, one of which is that   $\text{ord}(\frac{g_{1}(x)}{g_{3}(x)})=n$ and $\text{ord}(\frac{g_{2}(x)}{g_{4}(x)})=2$ where $\text{gcd}(n,2)=1$, and the other is that $\text{ord}(\frac{g_{2}(x)}{g_{4}(x)})=2n$ whatever the value of $\text{ord}(\frac{g_{1}(x)}{g_{3}(x)})$ is.

  \subsubsection{The first circumstance}
  The  condition  $\text{ord}(\frac{g_{1}(x)}{g_{3}(x)})=n$ in the first circumstance
 has been investigated on nearly all available quantum synchronizable codes, and is applicable to many cyclic codes, e.g., BCH codes~\cite{Fujiwara2013Algebraic}, punctured RM codes~\cite{Fujiwara2013Algebraic}, quadratic residue codes~\cite{Xie2014Quantum} and repeated-root cyclic codes~\cite{xie2016Q,Lan2018Non}. The other condition $\text{ord}(\frac{g_{2}(x)}{g_{4}(x)})=2$, however, has limited applications subject to the dual-containing constraint.
 One feasible solution is to use repeated-root codes of length $p^s$, where $s$ is a positive integer.

 To be concrete, let $C_{1},C_{3}$ be $p^s$-length dual-containing cyclic codes and $C_{2},C_{4}$ be $p^s$-length dual-containing negacyclic codes. Then $C_{1},C_{2},C_{3},C_{4}$ have  generator polynomials~\cite{Dinh2008On}
 \begin{equation}
 \begin{array}{l}
 g_{i}(x)=(x-1)^{p^s-k_{i}},\quad i=1,3,\\
 g_{j}(x)=(x+1)^{p^s-k_{j}},\quad j=2,4,\\
 \end{array}
 \end{equation}
 where $\frac{p^s}{2}\leq k_{i},k_{j}\leq p^s$. With the help of these codes, we can  build a family of   quantum synchronizable codes that possess the maximum synchronization error tolerance.
 \begin{Theorem}
 Let $C_{i}$ be a $[p^s,k_{i}]$ cyclic code and  $C_{j}$ be a $[p^s,k_{j}]$ negacyclic code, where $\frac{p^s}{2}\leq k_{i},k_{j}\leq p^s$ for $i\in\{1,3\}$ and $j\in\{2,4\}$. Suppose that $k_{3}-k_{1}>p^{s-1}$ and $k_{4}=k_{2}+1$, then for
 non-negative integers $a_{l}$ and $a_{r}$ such that $a_{l}+a_{r}<2p^s$, there exists an $(a_{l},a_{r})-[[2p^s,2(k_{1}+k_{3}-p^s)]]$ quantum synchronizable code.
  \end{Theorem}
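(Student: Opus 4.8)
The plan is to obtain the theorem as an immediate application of Theorem~\ref{thm2} to the four repeated-root codes, so that the only real work is the computation $\text{ord}(f(x))=2p^{s}$. First I would record the generator polynomials. Over $\mathbb{F}_{q}$ of characteristic $p$ one has $x^{p^{s}}-1=(x-1)^{p^{s}}$ and $x^{p^{s}}+1=(x+1)^{p^{s}}$, so the $[p^{s},k_{i}]$ cyclic codes $C_{1},C_{3}$ have $g_{i}(x)=(x-1)^{p^{s}-k_{i}}$ and the $[p^{s},k_{j}]$ negacyclic codes $C_{2},C_{4}$ have $g_{j}(x)=(x+1)^{p^{s}-k_{j}}$. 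The constraint $k\ge p^{s}/2$ is exactly what makes each of these codes dual-containing: the dual of $\langle(x-1)^{p^{s}-k}\rangle$ is $\langle(x-1)^{k}\rangle$, and $(x-1)^{p^{s}-k}\mid(x-1)^{k}$ iff $p^{s}-k\le k$; the negacyclic case is identical with $x+1$ in place of $x-1$. Moreover $k_{1}\le k_{3}$ and $k_{2}\le k_{4}=k_{2}+1$ give $C_{1}\subset C_{3}$ and $C_{2}\subset C_{4}$. Hence $C=C_{1}\curlyvee C_{2}$ and $D=C_{3}\curlyvee C_{4}$ satisfy all the hypotheses of Theorem~\ref{thm2}, which yields the asserted quantum synchronizable code whenever $a_{l}+a_{r}<\text{ord}(f(x))$.

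Next I would compute $f(x)$ from the explicit generators: $f(x)=\frac{g_{1}(x)g_{2}(x)}{g_{3}(x)g_{4}(x)}=(x-1)^{k_{3}-k_{1}}(x+1)^{k_{4}-k_{2}}=(x-1)^{k_{3}-k_{1}}(x+1)$, the last step using $k_{4}=k_{2}+1$. Since $q$ is odd, $x-1$ and $x+1$ are coprime in $\mathbb{F}_{q}[x]$, so $\text{ord}(f(x))=\text{lcm}(\text{ord}((x-1)^{k_{3}-k_{1}}),\text{ord}(x+1))$. Modulo $x+1$ we have $x\equiv-1$, whose multiplicative order is $2$ (again because $q$ is odd), so $\text{ord}(x+1)=2$. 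For the first factor I would invoke the classical formula for the order of a power of an irreducible polynomial: if $\text{ord}(g)=e$ and $p^{t-1}<b\le p^{t}$, then $\text{ord}(g^{b})=e\,p^{t}$; taking $g=x-1$ (so $e=\text{ord}(x-1)=1$) and $b=k_{3}-k_{1}$ gives $\text{ord}((x-1)^{k_{3}-k_{1}})=p^{t}$ with $p^{t-1}<k_{3}-k_{1}\le p^{t}$.

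The remaining point is to check that $t=s$, i.e.\ $p^{s-1}<k_{3}-k_{1}\le p^{s}$. The lower inequality is exactly the hypothesis $k_{3}-k_{1}>p^{s-1}$. The upper inequality is automatic from the dual-containing window: $k_{1}\ge p^{s}/2$ and $k_{3}\le p^{s}$ force $k_{3}-k_{1}\le\lfloor p^{s}/2\rfloor<p^{s}$. Therefore $\text{ord}((x-1)^{k_{3}-k_{1}})=p^{s}$, and since $\gcd(p^{s},2)=1$ we conclude $\text{ord}(f(x))=\text{lcm}(p^{s},2)=2p^{s}$. Substituting into Theorem~\ref{thm2} gives the claimed code for every pair $a_{l},a_{r}\ge 0$ with $a_{l}+a_{r}<2p^{s}$; as $2p^{s}$ is the largest possible magnitude of misalignment for a length-$2p^{s}$ block, this is the maximum synchronization error tolerance.

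I expect the only mildly delicate step to be the order computation for $(x-1)^{k_{3}-k_{1}}$: one must recall (or re-derive from $(x-1)^{p^{t}}=x^{p^{t}}-1$) precisely when $\text{ord}$ picks up a factor of $p$, and one must not forget to verify the upper bound $k_{3}-k_{1}\le p^{s}$, which is where the range $p^{s}/2\le k_{1},k_{3}\le p^{s}$ is silently used. Oddness of $q$ enters three times---to get $\text{ord}(x+1)=2$, for the coprimality of $x\pm1$, and for $\gcd(p^{s},2)=1$---consistent with the blanket hypothesis $p\neq2$ of this section.
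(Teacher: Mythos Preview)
Your proposal is correct and follows exactly the paper's approach: verify the hypotheses of Theorem~\ref{thm2} for the four repeated-root codes and show $\text{ord}\bigl((x+1)(x-1)^{k_{3}-k_{1}}\bigr)=2p^{s}$. The paper's own proof simply asserts this order without justification, whereas you supply the details (coprimality of $x\pm1$, the classical formula $\text{ord}(g^{b})=e\,p^{t}$ with $p^{t-1}<b\le p^{t}$, and the check $p^{s-1}<k_{3}-k_{1}\le p^{s}$), so your write-up is strictly more complete but not different in strategy.
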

  \begin{proof}
  The fact that $C_{1}\subset C_{3}$ and $C_{2}\subset C_{4}$ is evident since $k_{1}<k_{3}$ and $k_{2}<k_{4}$. Furthermore, the order of the polynomial
  $f(x)=\frac{g_{1}(x)g_{2}(x)}{g_{3}(x)g_{4}(x)}=(x+1)(x-1)^{k_{3}-k_{1}}$
  is $2p^s$. By applying these properties to Theorem~\ref{thm2}, we can naturally obtain  the quantum synchronizable codes of desired parameters.
  $\hfill\square$
  \end{proof}

  \subsubsection{The second circumstance}

 Assume that $\frac{q-1}{\text{gcd}(n,q-1)}$ is even, then there exists an isomorphism $\phi$ between the quotient rings $\frac{\mathbb{F}_{q}[x]}{\langle x^{n}-1\rangle}$ and $\frac{\mathbb{F}_{q}[x]}{\langle x^{n}+1\rangle}$ which maps $c(x)$ to $c(\lambda x)$, where $\lambda^{n}$=-1~\cite{Chen2014Repeated}. Furthermore, if the order of $c(x)$ is $n$, the order of $c(\lambda x)$ is $2n$. Therefore, the condition  $\text{ord}(\frac{g_{2}(x)}{g_{4}(x)})=2n$ on negacyclic codes $C_{2},C_{4}$
 can be achieved by finding two suitable cyclic codes.
 On that condition, most existing quantum synchronizable codes that provide the highest tolerance against synchronization errors can be generalized to quantum synchronizable codes of twice the lengths.
  As an example, we consider the use of $lp^s$-length repeated-root codes where $l$ is a prime distinct from $p$.

We first deal with the case $l\not=2$.
 Pick a primitive $l$-th root $\zeta$ of unity in the extension field $\mathbb{F}_{q^w}$ with $w=ord_{l}(q)$ indicating the order of $q$ in $\mathbb{Z}_{l}^{*}$.
 For  $0\leq t\leq e=\frac{l-1}{w}$, denote  by $M_{t}(x)$ the minimal polynomial of $\zeta^{t}$ over $\mathbb{F}_{q}$.
 The following lemma describes the structures of $lp^s$-length cyclic codes and negacyclic codes explicitly.

  %\begin{Lemma}~\cite{Chen2014Repeated}
%  Suppose that
%  $C=\langle g(x)\rangle\subset\frac{\mathbb{F}_{q}[x]}{\langle x^{lp^s}-1\rangle} $.
%  \begin{itemize}
%  \item[(1).] If $\emph{gcd}(l,q-1)=1$, then
%  $g(x)= \prod_{t=0}^{e}(M_{t}(x))^{p^s-a_{t}}$
%   with $0\leq a_{t}\leq p^s$.
%      Especially when $w=\emph{ord}_{l}(q)$ is odd, $g(x)$ can be represented as
%      $g(x)=(x-1)^{p^s-a_{0}}\prod_{t=1}^{\frac{e}{2}}(M_{t}(x))^{p^s-a_{t}}(M_{-t}(x))^{p^s-a_{-t}}$.
%         \item[(2).] If $\emph{gcd}(l,q-1)=l$, there exists a primitive $l$-th root $\zeta$ of unity in $\mathbb{F}_{q}$. Hence we can get
%       $g(x)=(x-1)^{p^s-a_{0}}\prod_{t=1}^{\frac{l-1}{2}}(x-\zeta^{t})^{p^s-a_{t}}(x-\zeta^{-t})^{p^s-a_{-t}}$
%       with $0\leq a_{t},a_{-t}\leq p^s$ for $0\leq t\leq \frac{l-1}{2}$.
%  \end{itemize}
%  \end{Lemma}

  \begin{Lemma}~\cite{Chen2014Repeated}
 Let $C_{1}$ be an $lp^s$-length cyclic code with a generator polynomial $g_{1}(x)$ and let $C_{2}$ be an $lp^s$-length negacyclic code with a generator polynomial $g_{2}(x)$.
  \begin{itemize}
  \item[(I).]
  If $\emph{gcd}(l,q-1)=1$, then
  \begin{equation}
  \begin{array}{l}
  g_{1}(x)= \prod_{t=0}^{e}(M_{t}(x))^{p^s-a_{1,t}},\\
  g_{2}(x)=\prod_{t=0}^{e}(\hat{M}_{t}(-x))^{p^{s}-a_{2,t}},
  \end{array}
  \end{equation}
  where $0\leq a_{1,t},a_{2,t}\leq p^s$ for all $t$.
  The notation $\hat{M}_{t}(x)$ denotes the monic polynomial of $M_{t}(x)$ dividing its leading coefficient. In particular when $w=ord_{l}(q)$ is odd, the generator polynomials of $C_{1}$ and $C_{2}$ are given by   \begin{equation}
  \begin{array}{l}
  g_{1}(x)=(x-1)^{p^s-a_{1,0}}\prod_{t=1}^{\frac{e}{2}}(M_{t}(x))^{p^s-a_{1,t}}(M_{-t}(x))^{p^s-a_{1,-t}},\\
  g_{2}(x)=(x+1)^{p^s-a_{2,0}}\prod_{t=1}^{\frac{e}{2}}(\hat{M}_{t}(-x))^{p^{s}-a_{2,t}}(\hat{M}_{-t}(-x))^{p^s-a_{2,-t}},
\end{array}
  \end{equation}
  where  $0\leq a_{1,t},a_{2,t},a_{1,-t},a_{2,-t}\leq p^s$ for $0\leq t\leq \frac{e}{2}$.

      Correspondingly, if $w$ is even, the dual codes $C_{1}^{\perp}$ and $C_{2}^{\perp}$ have generator polynomials
  \begin{equation}
  \begin{array}{l}
  g_{1}^{\perp}(x)=\prod_{t=0}^{e}(M_{t}(x))^{a_{1,t}},\\
  g_{2}^{\perp}(x)=\prod_{t=0}^{e}(\hat{M}_{t}(-x))^{a_{2,t}},
  \end{array}
  \end{equation}
 respectively. Otherwise if $w$ is odd, the dual codes have respective generator polynomial
 \begin{equation}
 \begin{array}{l}
 g_{1}^{\perp}(x)=(x-1)^{a_{1,0}}\prod_{t=1}^{\frac{e}{2}}(M_{t}(x))^{a_{1,-t}}(M_{-t}(x))^{a_{1,t}},\\
 g_{2}^{\perp}(x)=(x+1)^{a_{2,0}}\prod_{t=1}^{\frac{e}{2}}(\hat{M}_{t}(-x))^{a_{2,-t}}(\hat{M}_{-t}(-x))^{a_{2,t}}.
 \end{array}
 \end{equation}

 \item[(II).] If $\emph{gcd}(l,q-1)=l$, then we have
        \begin{equation}
        \begin{array}{l}
        g_{1}(x)=(x-1)^{p^s-a_{1,0}}\prod_{t=1}^{\frac{l-1}{2}}(x-\zeta^{t})^{p^s-a_{1,t}}(x-\zeta^{-t})^{p^s-a_{1,-t}},\\
        g_{2}(x)=(x+1)^{p^s-a_{2,0}}\prod_{t=1}^{\frac{l-1}{2}}(x+\zeta^{t})^{p^s-a_{2,t}}(x+\zeta^{-t})^{p^s-a_{2,-t}},
        \end{array}
        \end{equation}
       where $0\leq a_{1,t},a_{2,t},a_{1,-t},a_{2,-t}\leq p^s$ for $0\leq t\leq \frac{l-1}{2}$. The dual codes $C_{1}^{\perp}$ and $C_{2}^{\perp}$ have generator polynomials
       \begin{equation}
       \begin{array}{l}
       g_{1}^{\perp}(x)=(x-1)^{a_{1,0}}\prod_{t=1}^{\frac{l-1}{2}}(x-\zeta^{t})^{a_{1,-t}}(x-\zeta^{-t})^{a_{1,t}},\\
       g_{2}^{\perp}(x)=(x+1)^{a_{2,0}}\prod_{t=1}^{\frac{l-1}{2}}(x+\zeta^{t})^{a_{2,-t}}(x+\zeta^{-t})^{a_{2,t}},
       \end{array}
       \end{equation}
       respectively.
 \end{itemize}
\end{Lemma}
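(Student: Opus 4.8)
The plan is to reduce the whole statement to two inputs: the factorization of $x^{l}-1$ and $x^{l}+1$ over $\mathbb{F}_{q}$, and the chain-ring structure of the ambient quotient rings. First I would record the repeated-root identity. Since $\text{gcd}(l,p)=1$, the polynomials $x^{l}\mp 1$ are separable, and because $\mathbb{F}_{q}$ has characteristic $p$ we have $x^{lp^{s}}-1=(x^{l}-1)^{p^{s}}$ and $x^{lp^{s}}+1=(x^{l}+1)^{p^{s}}$. So it suffices to split $x^{l}-1$ and $x^{l}+1$ into distinct monic irreducibles and raise each to the $p^{s}$-th power. For $x^{l}-1$ I would use $q$-cyclotomic cosets modulo $l$: as $l$ is prime, $\mathbb{Z}_{l}^{*}$ is cyclic of order $l-1$, the subgroup $\langle q\rangle$ has order $w=\text{ord}_{l}(q)$, and every nonzero coset has exactly $w$ elements, giving $x^{l}-1=(x-1)\prod_{t=1}^{e}M_{t}(x)$ with $e=(l-1)/w$ and $\deg M_{t}=w$. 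The hypothesis $\text{gcd}(l,q-1)=l$ is exactly the case $w=1$, in which $x^{l}-1=\prod_{t=0}^{l-1}(x-\zeta^{t})$ splits completely.

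Next I would treat $x^{l}+1$ by the substitution $x\mapsto -x$, which is legitimate because $l$ is odd: the roots of $x^{l}+1$ are precisely the $-\zeta^{t}$, so its monic irreducible factors are the monic normalizations $\hat{M}_{t}(-x)$ of $M_{t}(-x)$, with $\hat{M}_{0}(-x)=x+1$, and in the split case they are the $x+\zeta^{t}$. Then I would invoke that $\frac{\mathbb{F}_{q}[x]}{\langle x^{lp^{s}}-1\rangle}\cong\prod_{i}\frac{\mathbb{F}_{q}[x]}{\langle f_{i}^{p^{s}}\rangle}$, a product of finite chain rings over the distinct irreducible factors $f_{i}$ of $x^{l}-1$; hence every ideal—equivalently every $lp^{s}$-length cyclic code—is generated by a divisor $\prod_{i}f_{i}^{j_{i}}$ of $x^{lp^{s}}-1$ with $0\le j_{i}\le p^{s}$, and writing $j_{i}=p^{s}-a_{1,t}$ (resp. $a_{2,t}$ for the negacyclic ring, using the $\hat{M}_{t}(-x)$ or $x+\zeta^{t}$) yields the asserted forms of $g_{1}(x)$ and $g_{2}(x)$. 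The refinement for odd $w$ (where $e=(l-1)/w$ is even) is merely the re-indexing of the $e$ nonzero cosets into the $e/2$ reciprocal pairs $\{t,-t\}$.

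For the dual codes I would use $g^{\perp}(x)=h^{*}(x)$ with $h(x)=(x^{lp^{s}}\mp 1)/g(x)$: the exponent of each irreducible factor in $h$ is $p^{s}$ minus its exponent in $g$, i.e. $a_{1,t}$ (resp. $a_{2,t}$), and passing to the monic reciprocal replaces a factor by the minimal polynomial of the reciprocal of its roots. The key computations are $M_{t}^{*}=M_{-t}$, $(x-\zeta^{t})^{*}=x-\zeta^{-t}$, and $(\hat{M}_{t}(-x))^{*}=\hat{M}_{-t}(-x)$ (all up to monic normalization), together with the observation that the coset of $t$ is closed under negation if and only if $-1\in\langle q\rangle \pmod l$, which for prime $l$ holds exactly when $w$ is even. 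This gives the self-reciprocal case ($w$ even: factors unchanged, exponents becoming $a_{1,t}$) versus the paired case ($w$ odd: $M_{t}\leftrightarrow M_{-t}$, exponents swapped to $a_{1,-t}$), and likewise for the linear and negacyclic factors in parts (I) and (II).

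The main obstacle I anticipate is precisely this last bookkeeping: keeping the reciprocal map $f\mapsto f^{*}$, the negation map $x\mapsto -x$ used to pass between $x^{l}-1$ and $x^{l}+1$, and the two index sets $\{t\}$ and $\{-t\}$ mutually consistent, and verifying that the even/odd-$w$ dichotomy coincides exactly with the self-reciprocity dichotomy, so that after taking $h^{*}$ the exponents $a_{1,t},a_{1,-t},a_{2,t},a_{2,-t}$ land on the factors exactly as written. Everything else is either the standard chain-ring classification of ideals or elementary manipulation of cyclotomic cosets.
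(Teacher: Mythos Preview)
Your proposal is correct and is the standard derivation of this result; the paper itself does not give a proof but simply cites it from Chen et al.~\cite{Chen2014Repeated}, so there is no in-paper argument to compare against. Your outline---factor $x^{l}\mp 1$ via $q$-cyclotomic cosets, lift to $x^{lp^{s}}\mp 1=(x^{l}\mp 1)^{p^{s}}$, classify ideals via the CRT/chain-ring decomposition, and compute duals through $g^{\perp}=h^{*}$ together with $M_{t}^{*}=M_{-t}$ and the criterion ``$-1\in\langle q\rangle\bmod l$ iff $w$ even''---is exactly the route taken in the cited reference, and the bookkeeping you flag (consistency of the reciprocal and negation maps with the $\pm t$ indexing) is indeed the only place requiring care.
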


By applying above codes to Theorem~\ref{thm2}, we can build
 quantum synchronizable codes of length $2lp^s$ as follows.
  \begin{Theorem}
 Let $l$ be an odd prime such that $\emph{gcd}(l,q-1)=1$. Suppose that  $C_{1}, C_{3}$ are dual-containing cyclic codes of length $lp^s$  and $C_{2}, C_{4}$ are dual-containing negacyclic codes of length $lp^s$.
   \begin{itemize}
  \item[(I).] If $w$ is even, then
   $C_{i},C_{j}$ have generator polynomials
  \begin{equation}
  \begin{array}{l}
  g_{i}(x)=\prod_{t=0}^{e}(M_{t}(x))^{p^s-a_{i,t}},\quad i\in\{1,3\},\\
  g_{j}(x)=\prod_{t=0}^{e}(\hat{M}_{t}(-x))^{p^s-a_{j,t}},\quad j\in\{2,4\},
  \end{array}
  \end{equation}
  respectively, where $\frac{p^s}{2}\leq a_{i,t},a_{j,t}\leq p^s$  for $0\leq t\leq e$.
  Assume that $a_{1,t}\leq a_{3,t}$ and $a_{2,t}\leq a_{4,t}$ for all $t$. If there exists an integer $r$ in the range $0\leq r\leq e$ such that $\emph{gcd}(r,l)=1$ and $a_{4,r}-a_{2,r}>p^{s-1}$, then we can  construct an $(a_{l},a_{r})-[[2lp^s,k]]$
      quantum synchronizable code where
      \begin{equation}
      k=2\left(\sum_{t=1}^{e}(a_{1,t}+a_{2,t})w+(a_{1,0}+a_{2,0})-lp^s\right).
      \end{equation}
  \item[(II).]
     If $w$ is odd, then
      $C_{i}$ and $C_{j}$ have generator polynomials
      \begin{equation}
      \begin{array}{l}
      g_{i}(x)=(x-1)^{p^s-a_{i,0}}\prod_{t=1}^{\frac{e}{2}}(M_{t}(x))^{p^s-a_{i,t}}(M_{-t}(x))^{p^{s}-a_{i,-t}},\quad i\in\{1,3\},\\
      g_{j}(x)=(x+1)^{p^s-a_{j,0}}\prod_{t=1}^{\frac{e}{2}}(\hat{M}_{t}(-x))^{p^s-a_{j,t}}(\hat{M}_{-t}(-x))^{p^s-a_{j,-t}},\quad j\in\{2,4\},
      \end{array}
      \end{equation}
      respectively, where $\frac{p^s}{2}\leq a_{i,0},a_{j,0}\leq p^s$ and $p^s\leq a_{i,t}+a_{i,-t},a_{j,t}+a_{j,-t}\leq 2p^s$ for $1\leq t\leq \frac{e}{2}$. Assume that
      \begin{equation}
      \begin{array}{ll}
         a_{1,0}\leq a_{3,0},& a_{2,0}\leq a_{4,0},\\
       a_{1,t}\leq a_{3,t},& a_{2,t}\leq a_{4,t},\\
       a_{1,-t}\leq a_{3,-t},& a_{2,-t}\leq a_{4,-t},
         \end{array}
         \end{equation}
      for $1\leq t\leq \frac{e}{2}$.
      If there exists an integer $r$ in the range $-\frac{e}{2} \leq r\leq \frac{e}{2}$ such that $\emph{gcd}(r,l)=1$ and $a_{4,r}-a_{2,r}>p^{s-1}$, then for any non-negative integers $a_{l},a_{r}$ satisfying $a_{l}+a_{r}<2lp^s$, we can obtain an $(a_{l},a_{r})-[[2lp^s,k]]$ quantum synchronizable code where
      \begin{equation}
      k=2\left(\sum\limits_{1\leq t\leq \frac{e}{2}}(a_{1,t}+a_{1,-t}+a_{2,t}+a_{2,-t})w+(a_{1,0}+a_{2,0})-lp^s\right).
      \end{equation}
  \end{itemize}
  \label{thm4}
  \end{Theorem}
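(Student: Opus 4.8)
The plan is to deduce both parts directly from Theorem~\ref{thm2}, which reduces the work to three tasks: (i) verify that $C_1\subset C_3$, $C_2\subset C_4$ and that all four constituent codes are dual-containing; (ii) compute $\dim C_1+\dim C_2$ to read off the quantum dimension $k$; and (iii) show $\mathrm{ord}(f(x))=2lp^s$ so that every pair $a_l,a_r$ with $a_l+a_r<2lp^s$ is admissible. Task (i) is immediate from the structure lemma above: when $w$ is even the generator polynomials and their duals are products of powers of the fixed set of irreducibles $\{M_t(x)\}$ (resp.\ $\{\hat M_t(-x)\}$), so divisibility of generator polynomials reduces to inequalities among exponents: $C_1\subset C_3\iff a_{1,t}\le a_{3,t}$ for all $t$, $C_2\subset C_4\iff a_{2,t}\le a_{4,t}$, and $C_i^{\perp}\subset C_i\iff p^s-a_{i,t}\le a_{i,t}\iff a_{i,t}\ge p^s/2$; the negacyclic codes and the case $w$ odd (with $a_{i,0}\ge p^s/2$ and $a_{i,t}+a_{i,-t}\ge p^s$) are handled identically. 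All of these are exactly the standing hypotheses.

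For task (ii) I would use that, $l$ being prime with $\mathrm{ord}_l(q)=w$, every nonzero $q$-cyclotomic coset modulo $l$ has size $w$, so $\deg M_0=1$, $\deg M_t=w$ for $t\neq 0$, and there are $e=(l-1)/w$ nonzero cosets. A routine degree count then gives $\dim C_i=lp^s-\deg g_i=a_{i,0}+w\sum_{t=1}^{e}a_{i,t}$ for $w$ even, and $\dim C_i=a_{i,0}+w\sum_{t=1}^{e/2}(a_{i,t}+a_{i,-t})$ for $w$ odd (the $p^s$-terms cancel because $we=l-1$). Since $C=C_1\curlyvee C_2$ has dimension $\dim C_1+\dim C_2$ and here $n=lp^s$, Theorem~\ref{thm2} returns a quantum code of dimension $2(\dim C_1+\dim C_2-lp^s)$, which is exactly the stated $k$ in both cases.

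Task (iii) is the heart of the matter. Write $f(x)=\tfrac{g_1(x)}{g_3(x)}\cdot\tfrac{g_2(x)}{g_4(x)}$, a product of two coprime polynomials since one divides $x^{lp^s}-1$, the other divides $x^{lp^s}+1$, and $\mathrm{char}\,\mathbb{F}_q$ is odd. Fix the index $r$ furnished by the hypothesis. In $\tfrac{g_2(x)}{g_4(x)}$ the irreducible factor $\hat M_r(-x)$ occurs with multiplicity $a_{4,r}-a_{2,r}$, and its roots are the elements $-\zeta^{rq^{i}}$; since $\gcd(r,l)=1$ and $\gcd(q,l)=1$, each $\zeta^{rq^{i}}$ is a primitive $l$-th root of unity, so $-\zeta^{rq^{i}}$ has multiplicative order $\mathrm{lcm}(2,l)=2l$ ($l$ being odd), whence $\mathrm{ord}(\hat M_r(-x))=2l$. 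Because $p^{s-1}<a_{4,r}-a_{2,r}\le p^s$, the identity $\mathrm{ord}(P^{e})=p^{\lceil\log_p e\rceil}\mathrm{ord}(P)$ gives $\mathrm{ord}\big(\hat M_r(-x)^{\,a_{4,r}-a_{2,r}}\big)=2lp^s$, hence $2lp^s\mid\mathrm{ord}(f(x))$; conversely $f(x)\mid g_C(x)\mid x^{2lp^s}-1$ forces $\mathrm{ord}(f(x))\mid 2lp^s$, so $\mathrm{ord}(f(x))=2lp^s$. Theorem~\ref{thm2} applied to $C=C_1\curlyvee C_2$ and $D=C_3\curlyvee C_4$ then yields the claimed $(a_l,a_r)\text{-}[[2lp^s,k]]$ code; the case $w$ odd differs only in letting $r$ range over $-\tfrac e2\le r\le\tfrac e2$ with the matching factor from the structure lemma. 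I expect the main obstacle to be precisely this order computation — confirming that $\gcd(r,l)=1$, together with $l$ odd and $l\ne p$, pins $\mathrm{ord}(\hat M_r(-x))$ at exactly $2l$ before the repeated-root multiplicity lifts it to $2lp^s$.
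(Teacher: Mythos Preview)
Your proposal is correct and follows essentially the same route as the paper's proof: verify the nesting and dual-containing conditions via the exponent inequalities from the structure lemma, compute $f(x)$ and isolate the factor $\hat M_r(-x)^{a_{4,r}-a_{2,r}}$, use $\gcd(r,l)=1$ to get $\mathrm{ord}(\hat M_r(-x))=2l$ and the multiplicity bound $a_{4,r}-a_{2,r}>p^{s-1}$ to lift the order to $2lp^s$, then bound from above by $f(x)\mid x^{2lp^s}-1$ and apply Theorem~\ref{thm2}. Your exposition is slightly more explicit (you spell out the root description and the order-of-a-power formula), but the argument is the same.
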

  \begin{proof}
 Given an even $w$, the dual-containing properties of $C_{i}$ and $C_{j}$, for $i\in\{1,3\}$ and $j\in\{2,4\}$,  are guaranteed when the parameters $a_{i,t},a_{j,t}$ are in the range $\frac{p^s}{2}\leq a_{i,t},a_{j,t}\leq p^s$ for all $t$. Furthermore, due to the assumption that $a_{1,t}\leq a_{3,t}$ and $a_{2,t}\leq a_{4,t}$ for $0\leq t\leq e$, we have $C_{1}\subset C_{3}$ and $C_{2}\subset C_{4}$. In that case, the polynomial $f(x)$ in Theorem~\ref{thm2} is
 \begin{equation}
 \begin{array}{ll}
 f(x)&=\frac{g_{1}(x)g_{2}(x)}{g_{3}(x)g_{4}(x)}=\frac{\prod_{t=0}^{e}(M_{t}(x))^{p^s-a_{1,t}}(\hat{M}_{t}(-x))^{p^s-a_{2,t}}}{\prod_{t=0}^{e}(M_{t}(x))^{p^s-a_{3,t}}(\hat{M}_{t}(-x))^{p^s-a_{4,t}}}\\
 & =\prod_{t=0}^{e}(M_{t}(x))^{a_{3,t}-a_{1,t}}(\hat{M}_{t}(-x))^{a_{4,t}-a_{2,t}}.
 \end{array}
 \end{equation}
 Pick an integer $r$ that is relatively prime to $l$ such that $a_{4,r}-a_{2,r}>p^{s-1}$, then the order of $f(x)$ has a factor $p^s\cdot\text{ord}(\hat{M}_{r}(-x))$. Note that $\text{ord}(\hat{M}_{r}(x))=\frac{l}{\text{gcd}(r,l)}=l$. Hence we have $\text{ord}(\hat{M}_{r}(-x))=2l$, indicating that $\text{ord}(f(x))\geq 2lp^s$. Due to the fact that $\text{ord}(f(x))\leq 2lp^s$, we can finally obtain that $\text{ord}(f(x))=2lp^s$. Moreover, $C_{i}$ has dimension
 \begin{equation}
 k_{i}=lp^s-\left((p^s-a_{i,0})+\sum_{t=1}^{e}(p^s-a_{i,t})w\right)=a_{i,0}+\sum_{t=1}^{e}a_{i,t}\cdot w.
 \end{equation}
And analogously, $C_{j}$ has dimension $k_{j}=a_{j,0}+\sum_{t=1}^{e}a_{j,t}\cdot w$. Therefore, the quantum synchronizable code built on them has the desired parameters.
For an odd $w$, the statements in (II) can be verified using similar arguments.
  $\hfill\square$
  \end{proof}

 \begin{Theorem}
Let $l$ be an odd prime such that $\text{gcd}(l,q-1)=l$. Suppose that $C_{1},C_{3}$ are dual-containing cyclic codes of length $lp^s$ and $C_{2},C_{4}$ are dual-containing negacyclic codes of length $lp^s$. The generator polynomials of $C_{i},C_{j}$ for $i\in\{1,3\}$ and $j\in\{2,4\}$ are
\begin{equation}
\begin{array}{l}
g_{i}(x)=(x-1)^{p^s-a_{i,0}}\prod_{t=1}^{\frac{l-1}{2}}(x-\zeta^{t})^{p^{s}-a_{i,t}}(x-\zeta^{-t})^{p^s-a_{i,-t}},\\
g_{j}(x)=(x+1)^{p^s-a_{j,0}}\prod_{t=1}^{\frac{l-1}{2}}(x+\zeta^{t})^{p^s-a_{j,t}}(x+\zeta^{-t})^{p^s-a_{j,-t}},
\end{array}
\end{equation}
respectively, where $\frac{p^s}{2}\leq a_{i,0},a_{j,0}\leq p^s$ and $p^s\leq a_{i,t}+a_{i,-t},a_{j,t}+a_{j,-t}\leq 2p^s$ for $1\leq t\leq \frac{l-1}{2}$. Assume that
\begin{equation}
\begin{array}{ll}
a_{1,0}\leq a_{3,0},& a_{2,0}\leq a_{4,0},\\
a_{1,t}\leq a_{3,t},& a_{2,t}\leq a_{4,t},\\
a_{1,-t}\leq a_{3,-t},& a_{2,-t}\leq a_{4,-t},
\end{array}
\end{equation}
for $1\leq t\leq \frac{l-1}{2}$. If we can pick an integer $r$ with $-\frac{l-1}{2}\leq r\leq \frac{l-1}{2}$ such that $\text{gcd}(r,l)=1$ and $a_{4,r}-a_{2,r}>p^{s-1}$, then given a pair of non-negative integers $a_{l},a_{r}$ satisfying $a_{l}+a_{r}<2lp^s$, there exists
a quantum synchronizable code of length $2lp^s$ and dimension
\begin{equation}
k=2\left((a_{1,0}+a_{2,0})+\sum_{t=1}^{\frac{l-1}{2}}(a_{1,t}+a_{1,-t}+a_{2,t}+a_{2,-t})-lp^s\right).
\end{equation}
\label{thm5}
\end{Theorem}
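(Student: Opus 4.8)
The plan is to run the same argument as for Theorem~\ref{thm4}(II), of which Theorem~\ref{thm5} is essentially the case $w=1$ (here $\gcd(l,q-1)=l$ forces $q\equiv1\pmod l$, so $\mathbb{F}_q$ already contains all $l$-th roots of unity and each minimal polynomial $M_t(x)$ collapses to the linear factor $x-\zeta^{t}$ listed in part (II) of the Lemma). First I would check the three hypotheses needed to invoke Theorem~\ref{thm2}. Comparing $g_i(x)$ with the dual generator $g_i^{\perp}(x)$ from the Lemma, the containment $C_i^{\perp}\subset C_i$ is equivalent to $g_i(x)\mid g_i^{\perp}(x)$, i.e.\ to $p^s-a_{i,0}\le a_{i,0}$ together with $p^s-a_{i,\pm t}\le a_{i,\mp t}$; these are precisely $a_{i,0}\ge p^s/2$ and $a_{i,t}+a_{i,-t}\ge p^s$, the ranges imposed in the statement, and the same computation gives the dual-containment of the negacyclic codes $C_j$. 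The nestings $C_1\subset C_3$ and $C_2\subset C_4$ are equivalent to $g_3(x)\mid g_1(x)$ and $g_4(x)\mid g_2(x)$, which split term by term into the assumed inequalities $a_{1,0}\le a_{3,0}$, $a_{1,\pm t}\le a_{3,\pm t}$ and $a_{2,0}\le a_{4,0}$, $a_{2,\pm t}\le a_{4,\pm t}$. Thus Theorem~\ref{thm2} already supplies an $(a_{l},a_{r})$-$[[2lp^s,\,2(k_1+k_2-lp^s)]]$ quantum synchronizable code for all $a_{l},a_{r}\ge 0$ with $a_{l}+a_{r}<\text{ord}(f(x))$, where $f(x)=g_1(x)g_2(x)/(g_3(x)g_4(x))$.

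What remains is to pin down $\text{ord}(f(x))$ and the dimension. After cancelling common factors,
\[
f(x)=(x-1)^{a_{3,0}-a_{1,0}}(x+1)^{a_{4,0}-a_{2,0}}\prod_{t=1}^{\frac{l-1}{2}}(x-\zeta^{t})^{a_{3,t}-a_{1,t}}(x-\zeta^{-t})^{a_{3,-t}-a_{1,-t}}(x+\zeta^{t})^{a_{4,t}-a_{2,t}}(x+\zeta^{-t})^{a_{4,-t}-a_{2,-t}}.
\]
For the index $r$ furnished by the hypothesis, $(x+\zeta^{r})$ is a linear factor of $f(x)$ with exponent $a_{4,r}-a_{2,r}$, and its root $-\zeta^{r}$ is a primitive $2l$-th root of unity: $\zeta^{r}$ has order $l$ since $\gcd(r,l)=1$ with $l$ prime, while $p$ odd gives $(-\zeta^{r})^{l}=-1\neq1$, so the order is neither $1$ nor $l$, and $l\nmid 2r$ rules out order $2$. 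Since $p\nmid 2l$ and $p^{s-1}<a_{4,r}-a_{2,r}\le p^s$, the standard repeated-root identity $\text{ord}\big(g(x)^{e}\big)=\text{ord}(g(x))\cdot p^{\lceil\log_p e\rceil}$ yields $\text{ord}\big((x+\zeta^{r})^{a_{4,r}-a_{2,r}}\big)=2lp^s$, so $\text{ord}(f(x))\ge 2lp^s$; and $f(x)\mid x^{2lp^s}-1$ forces $\text{ord}(f(x))\mid 2lp^s$, whence $\text{ord}(f(x))=2lp^s$. For the dimensions, $\deg g_i=(p^s-a_{i,0})+\sum_{t=1}^{(l-1)/2}\big((p^s-a_{i,t})+(p^s-a_{i,-t})\big)=lp^s-\big(a_{i,0}+\sum_{t=1}^{(l-1)/2}(a_{i,t}+a_{i,-t})\big)$, hence $k_i=a_{i,0}+\sum_{t=1}^{(l-1)/2}(a_{i,t}+a_{i,-t})$, and likewise $k_j=a_{j,0}+\sum_{t=1}^{(l-1)/2}(a_{j,t}+a_{j,-t})$; substituting $k_1$ and $k_2$ into $2(k_1+k_2-lp^s)$ reproduces the stated $k$.

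The only step beyond bookkeeping is the evaluation of $\text{ord}(f(x))$, and inside it the two claims that $-\zeta^{r}$ has multiplicative order exactly $2l$ and that the repeated-root exponent drags in the full $p$-power $p^s$; this is exactly where the hypotheses $p\neq2$, $l$ prime, $\gcd(r,l)=1$, and $a_{4,r}-a_{2,r}>p^{s-1}$ (as opposed to merely $a_{4,r}>a_{2,r}$) are all needed. With $\text{ord}(f(x))=2lp^s$ and $k_1,k_2$ in hand, one concludes by invoking Theorem~\ref{thm2}.
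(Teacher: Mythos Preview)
Your proposal is correct and follows essentially the same route as the paper: the paper's own proof is the one-liner ``Following a similar proof to that of Theorem~\ref{thm4}, we can obtain the desired results,'' and your argument is precisely that---the $w=1$ specialization of the Theorem~\ref{thm4} machinery (verify dual-containment and nesting from the exponent constraints, compute $f(x)$, use the negacyclic factor indexed by $r$ to force $\text{ord}(f(x))=2lp^s$, read off the dimensions). If anything you have written out more of the details (e.g.\ the explicit check that $-\zeta^{r}$ has order $2l$ and the role of the bound $a_{4,r}-a_{2,r}>p^{s-1}$) than the paper itself supplies.
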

\begin{proof}
Following a  similar proof to that of Theorem~\ref{thm4}, we can obtain the desired results.
$\hfill\square$
\end{proof}

In the case of $l=2$, quantum synchronizable codes that possess the maximum synchronization error tolerance can also be constructed from $lp^s$-length cyclic codes and negacyclic codes. The following lemma describes the structures of $2p^s$-length repeated-root codes clearly.
\begin{Lemma}~\cite{Chen2014Repeated}
Let $C_{1}$ be a cyclic code of length $2p^s$. Then $C_{1}$ and its dual code $C_{1}^{\perp}$ have  respective generator polynomial
\begin{equation}
\begin{array}{l}
g_{1}(x)=(x-1)^{p^s-a_{1,0}}(x+1)^{p^s-a_{1,1}},\\
g_{1}^{\perp}(x)=(x-1)^{a_{1,0}}(x+1)^{a_{1,1}},
\end{array}
\end{equation}
where $0\leq a_{1,0},a_{1,1}\leq p^s$.

Let $C_{2}$ be a negacyclic code of length $2p^s$.
If $q\equiv 1(\emph{mod } 4)$, there exists an element $\eta\in\mathbb{F}_{q}^{*}$ such that $\eta^{2}=-1$. In that case,
$C_{2}$ and $C_{2}^{\perp}$ have generator polynomials
\begin{equation}
\begin{array}{l}
g_{2}(x)=(x-\eta)^{p^s-a_{2,0}}(x+\eta)^{p^s-a_{2,1}},\\
g_{2}^{\perp}(x)=(x-\eta)^{a_{2,1}}(x+\eta)^{a_{2,0}},
\end{array}
\end{equation}
respectively, where $0\leq a_{2,0},a_{2,1}\leq p^s$. Otherwise if $q\equiv 3(\emph{mod } 4)$, their generator polynomials are given by
\begin{equation}
\begin{array}{l}
g_{2}(x)=(x^2+1)^{p^s-a_{2}},\\
g_{2}^{\perp}(x)=(x^2+1)^{a_{2}},
\end{array}
\end{equation}
where $0\leq a_{2}\leq p^s$.
\label{lem3}
\end{Lemma}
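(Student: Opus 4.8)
The plan is to pin down the entire ideal structure of the quotient rings $\frac{\mathbb{F}_q[x]}{\langle x^{2p^s}-1\rangle}$ and $\frac{\mathbb{F}_q[x]}{\langle x^{2p^s}+1\rangle}$ by factoring each modulus into irreducibles over $\mathbb{F}_q$, and then to read off every admissible generator polynomial as a monic divisor. First I would exploit the Frobenius identity in characteristic $p$: since $q=p^m$ with $p$ odd, the map $f\mapsto f^{p^s}$ is additive on $\mathbb{F}_q[x]$, so $x^{2p^s}-1=(x^2-1)^{p^s}$ and $x^{2p^s}+1=(x^2+1)^{p^s}$ (using $(-1)^{p^s}=-1$ as $p^s$ is odd). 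Because $p\neq 2$, the roots $1$ and $-1$ are distinct, whence $x^2-1=(x-1)(x+1)$ splits into two distinct linear factors and $x^{2p^s}-1=(x-1)^{p^s}(x+1)^{p^s}$ is its irreducible factorization. Every cyclic code of length $2p^s$ is an ideal, so its generator polynomial is a monic divisor $(x-1)^{p^s-a_{1,0}}(x+1)^{p^s-a_{1,1}}$ with $0\le a_{1,0},a_{1,1}\le p^s$, which is precisely the asserted form for $C_1$.

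Next I would handle the negacyclic case, where the decisive dichotomy is whether $x^2+1$ splits over $\mathbb{F}_q$. This is controlled by the quadratic character of $-1$: for odd $q$, the element $-1$ is a square exactly when $q\equiv 1\pmod 4$. In that subcase there is $\eta\in\mathbb{F}_q^{*}$ with $\eta^2=-1$, so $x^2+1=(x-\eta)(x+\eta)$ and $x^{2p^s}+1=(x-\eta)^{p^s}(x+\eta)^{p^s}$; hence any generator polynomial has the form $(x-\eta)^{p^s-a_{2,0}}(x+\eta)^{p^s-a_{2,1}}$. When $q\equiv 3\pmod 4$, the element $-1$ is a non-square, so $x^2+1$ is irreducible and $x^{2p^s}+1=(x^2+1)^{p^s}$ is already the irreducible factorization, forcing $g_2(x)=(x^2+1)^{p^s-a_2}$ with $0\le a_2\le p^s$. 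This reproduces both negacyclic subcases of the lemma.

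Finally, for the dual codes I would invoke the relation $g^{\perp}(x)=h^{*}(x)$ recalled in the preliminaries, where $h(x)=(x^{2p^s}\mp 1)/g(x)$ is the parity-check polynomial and $h^{*}$ is its monic reciprocal, so the task reduces to reciprocating the linear and quadratic factors. The factors $(x-1)$, $(x+1)$ and $(x^2+1)$ are each self-reciprocal, which immediately yields $g_1^{\perp}(x)=(x-1)^{a_{1,0}}(x+1)^{a_{1,1}}$ and, in the $q\equiv 3\pmod 4$ case, $g_2^{\perp}(x)=(x^2+1)^{a_2}$. The one step demanding care is the $q\equiv 1\pmod 4$ case: from $\eta^2=-1$ one gets $\eta^{-1}=-\eta$, so reciprocation sends the root $\eta$ to $-\eta$ and vice versa, interchanging the two linear factors. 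I expect this exponent swap to be the main subtlety, and it is exactly what produces the asymmetric form $g_2^{\perp}(x)=(x-\eta)^{a_{2,1}}(x+\eta)^{a_{2,0}}$ in the statement. Verifying that this swap is tracked correctly, together with the dimension bookkeeping $\dim C=2p^s-\deg g$ matching the stated exponent ranges, completes the proof.
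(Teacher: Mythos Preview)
Your argument is correct and complete: the Frobenius factorization of $x^{2p^s}\mp 1$, the quadratic-character dichotomy for $x^2+1$, and the reciprocal computation (including the exponent swap from $\eta^{-1}=-\eta$) are exactly the ingredients needed. Note, however, that the paper does not actually prove this lemma; it is quoted verbatim from the cited reference~\cite{Chen2014Repeated}, so there is no in-paper proof to compare against. Your write-up is the standard derivation one would expect in that source.
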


Taking similar arguments to the proof of Theorem~\ref{thm4}, we can obtain the following results.

\begin{Theorem}
Suppose that $C_{1},C_{3}$ are dual-containing cyclic codes of length $2p^s$ and $C_{2},C_{4}$ are dual-containing negacyclic codes of length $2p^s$.
\begin{itemize}
\item[(I).] If $q\equiv 1(\emph{mod }4)$, the generator polynomials of $C_{1},C_{2},C_{3},C_{4}$ are of the forms
\begin{equation}
\begin{array}{ll}
g_{i}(x)=(x-1)^{p^s-a_{i,0}}(x+1)^{p^s-a_{i,1}}, & i\in\{1,3\},\\
g_{j}(x)=(x-\eta)^{p^s-a_{j,0}}(x+\eta)^{p^s-a_{j,1}}, & j\in\{2,4\},
\end{array}
\end{equation}
where $\frac{p^s}{2}\leq a_{i,0},a_{i,1}\leq p^s$ and $p^s\leq a_{j,0}+a_{j,1}\leq 2p^s$. The notation $\eta$ denotes the element in $\mathbb{F}_{q}^{*}$ such that $\eta^{2}=-1$.
Assume that
\begin{equation}
\begin{array}{ll}
a_{1,0}\leq a_{3,0}, & a_{1,1}\leq a_{3,1},\\
a_{2,0}\leq a_{4,0}, & a_{2,1}\leq a_{4,1}.
\end{array}
\end{equation}
If either $a_{4,0}-a_{2,0}>p^{s-1}$ or $a_{4,1}-a_{2,1}>p^{s-1}$ holds, then for any non-negative pair $a_{l},a_{r}$ satisfying $a_{l}+a_{r}<4p^s$, we can build an $(a_{l},a_{r})-[[4p^s,k]]$ quantum synchronizable code, where
\begin{equation}
k=2\left(a_{1,0}+a_{1,1}+a_{2,0}+a_{2,1}-2p^s\right).
\end{equation}

\item[(II).] If $q\equiv 3(\emph{mod } 4)$, the generator polynomials of $C_{1},C_{2},C_{3},C_{4}$ are given by
    \begin{equation}
    \begin{array}{ll}
    g_{i}(x)=(x-1)^{p^s-a_{i,0}}(x+1)^{p^s-a_{i,1}},& i\in\{1,3\},\\
    g_{j}(x)=(x^2+1)^{p^s-a_{j}},& j\in\{2,4\},
    \end{array}
    \end{equation}
    where $\frac{p^s}{2}\leq a_{i,0},a_{i,2},a_{j}\leq p^s$. Assume that $a_{1,0}\leq a_{3,0}$, $a_{1,1}\leq a_{3,1}$ and $a_{2}\leq a_{4}$. If $a_{4}-a_{2}>p^{s-1}$, then for any non-negative integers $a_{l},a_{r}$ such that $a_{l}+a_{r}<4p^s$, there exists an $(a_{l},a_{r})-[[4p^s,k]]$ quantum synchronizable code, where
    \begin{equation}
    k=2\left(a_{1,0}+a_{1,1}+2a_{2}-2p^s\right).
    \end{equation}
\end{itemize}
\label{thm6}
\end{Theorem}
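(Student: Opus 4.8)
The plan is to follow the three-step recipe used in the proof of Theorem~\ref{thm4}, now with the length-$2p^{s}$ structures of Lemma~\ref{lem3}: first translate the dual-containing and nesting hypotheses into the exponent inequalities stated in the theorem, then compute $f(x)=\frac{g_{1}(x)g_{2}(x)}{g_{3}(x)g_{4}(x)}$ and show $\text{ord}(f(x))=4p^{s}$, and finally compute $\dim(C_{1}\curlyvee C_{2})$ and feed everything into Theorem~\ref{thm2}.

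For the first step I would use Lemma~\ref{lem3}. A length-$2p^{s}$ cyclic code $C_{i}$ is dual-containing exactly when $g_{i}(x)\mid g_{i}^{\perp}(x)$; reading off the exponents of $(x-1)$ and $(x+1)$ this is $p^{s}-a_{i,0}\le a_{i,0}$ and $p^{s}-a_{i,1}\le a_{i,1}$, i.e.\ $\frac{p^{s}}{2}\le a_{i,0},a_{i,1}\le p^{s}$. Likewise a length-$2p^{s}$ negacyclic code $C_{j}$ is dual-containing iff $a_{j,0}+a_{j,1}\ge p^{s}$ when $q\equiv1\ (\text{mod }4)$ (comparing $g_{j}(x)=(x-\eta)^{p^{s}-a_{j,0}}(x+\eta)^{p^{s}-a_{j,1}}$ with $g_{j}^{\perp}(x)=(x-\eta)^{a_{j,1}}(x+\eta)^{a_{j,0}}$), and iff $a_{j}\ge\frac{p^{s}}{2}$ when $q\equiv3\ (\text{mod }4)$. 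The inclusions $C_{1}\subset C_{3}$ and $C_{2}\subset C_{4}$ are just the divisibilities $g_{3}\mid g_{1}$ and $g_{4}\mid g_{2}$, which are precisely the coordinatewise exponent inequalities $a_{1,0}\le a_{3,0}$, $a_{1,1}\le a_{3,1}$, $a_{2,0}\le a_{4,0}$, $a_{2,1}\le a_{4,1}$ (resp.\ $a_{2}\le a_{4}$).

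The main point is the order computation. After cancellation,
\begin{equation}
f(x)=\begin{cases}
(x-1)^{a_{3,0}-a_{1,0}}(x+1)^{a_{3,1}-a_{1,1}}(x-\eta)^{a_{4,0}-a_{2,0}}(x+\eta)^{a_{4,1}-a_{2,1}}, & q\equiv1\ (\text{mod }4),\\[2pt]
(x-1)^{a_{3,0}-a_{1,0}}(x+1)^{a_{3,1}-a_{1,1}}(x^{2}+1)^{a_{4}-a_{2}}, & q\equiv3\ (\text{mod }4).
\end{cases}
\end{equation}
In either case the displayed irreducible factors are pairwise coprime ($1,-1,\eta,-\eta$ are distinct because $\eta^{2}=-1$, and $x\pm1$ are coprime to $x^{2}+1$), so $\text{ord}(f(x))$ is the least common multiple of the orders of the prime-power factors, and for a single factor one has $\text{ord}((x-\alpha)^{m})=\text{ord}(x-\alpha)\cdot p^{\lceil\log_{p}m\rceil}$. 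Since $\eta$, $-\eta$ and the roots of $x^{2}+1$ all have multiplicative order $4$, the hypothesis $a_{4,0}-a_{2,0}>p^{s-1}$ (or $a_{4,1}-a_{2,1}>p^{s-1}$, resp.\ $a_{4}-a_{2}>p^{s-1}$) forces the relevant ceiling to equal $s$, so that factor alone has order $4p^{s}$; hence $\text{ord}(f(x))\ge 4p^{s}$. On the other hand $f(x)\mid g_{1}(x)g_{2}(x)\mid x^{4p^{s}}-1$ with $f(0)\ne0$, so $\text{ord}(f(x))\mid 4p^{s}$, and therefore $\text{ord}(f(x))=4p^{s}$. Thus every pair of non-negative integers with $a_{l}+a_{r}<4p^{s}$ is admissible in Theorem~\ref{thm2}.

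Finally, $\deg g_{i}=2p^{s}-a_{i,0}-a_{i,1}$ gives $\dim C_{i}=a_{i,0}+a_{i,1}$ for $i\in\{1,3\}$, while $\dim C_{j}=a_{j,0}+a_{j,1}$ (case~I) and $\dim C_{j}=2a_{j}$ (case~II, since $\deg g_{j}=2(p^{s}-a_{j})$). The code $C_{1}\curlyvee C_{2}$ then has length $4p^{s}$ and dimension $k_{1}+k_{2}$, so Theorem~\ref{thm2} produces an $(a_{l},a_{r})$-$[[4p^{s},2(k_{1}+k_{2}-2p^{s})]]$ quantum synchronizable code, which is exactly the claimed $k$ in each case. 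I do not expect a genuine obstacle here; the only place demanding care is the order computation — specifically verifying that $-\eta$ and the roots of $x^{2}+1$ really have multiplicative order $4$, and that the lcm is capped at $4p^{s}$ by the ambient factorization — everything else being bookkeeping analogous to Theorem~\ref{thm4}.
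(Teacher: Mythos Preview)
Your proposal is correct and follows essentially the same approach as the paper, which proves Theorem~\ref{thm6} simply by pointing to the argument of Theorem~\ref{thm4}: verify the dual-containing and nesting conditions via Lemma~\ref{lem3}, cancel to obtain $f(x)$, pin down $\text{ord}(f(x))=4p^{s}$ from a single factor whose exponent exceeds $p^{s-1}$, and then read off the dimension and apply Theorem~\ref{thm2}. Your write-up is in fact more explicit than the paper's one-line deferral, and the care you flag about the order computation (checking that $\pm\eta$ and the roots of $x^{2}+1$ have multiplicative order $4$, and bounding the lcm by the ambient divisor $x^{4p^{s}}-1$) is exactly the content the paper leaves implicit.
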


We can tell from Theorems~\ref{thm4},~\ref{thm5} and~\ref{thm6} that  quantum synchronizable codes of length $2lp^s$ that reach the maximum synchronization error tolerance can be derived from cyclic codes and negacyclic codes of length $lp^s$. This can be seen as a generalization of  results in Ref.~\cite{Lan2018Non}, where $lp^s$-length cyclic codes are exploited in the construction of
$lp^s$-length quantum synchronizable codes that tolerate misalignment by $lp^s$ qubits. Similar generalizations can be applied to other existing quantum synchronizable codes of length $n$ due to the isomorphism $\phi$ between $\frac{\mathbb{F}_{q}[x]}{\langle x^n-1\rangle}$ and $\frac{\mathbb{F}_{q}[x]}{\langle x^n+1\rangle}$~\cite{Chen2014Repeated} that maps $c(x)$ to $c(\lambda x)$ where $\lambda^n=-1$, when $\frac{q-1}{\text{gcd}(n,q-1)}$ is even.

\subsection{\label{sec:level2}The minimum distances}

From Theorem~\ref{thm2} we can see that, quantum synchronizable codes derived from cyclic codes $C_{1}\curlyvee C_{2}$ and $C_{3}\curlyvee C_{4}$ have minimum distances no worse or up to twice larger than those from the component cyclic codes $C_{1}$ and $C_{3}$. In other words, quantum synchronizable codes based on the $(\bm{u}+\bm{v}|\bm{u}-\bm{v})$ scheme can provide good performance in correcting Pauli errors. Take the codes from Theorem~\ref{thm4} (I) for an example.

Suppose that $l$ is an odd prime such that $\text{gcd}(l,q-1)$. If $w$ is even, then an $lp^s$-length cyclic code $C_{i}$ and an $lp^s$-length negacyclic code $C_{j}$, for $i\in\{1,3\}$ and $j\in\{2,4\}$, have respective generator polynomial
\begin{equation}
\begin{array}{l}
g_{i}(x)=\prod_{t=0}^{e}(M_{t}(x))^{p^s-a_{i,t}},\\
g_{j}(x)=\prod_{t=0}^{e}(\hat{M}_{t}(-x))^{p^s-a_{j,t}},
\end{array}
\end{equation}
where $0\leq a_{i,t},a_{j,t}\leq p^s$ for all $t$. The minimum distance of $C_{i}$ has been thoroughly investigated in Ref.~\cite{Lan2018Non}. Due to the isomorphism $\phi$ between $\frac{\mathbb{F}_{q}[x]}{\langle x^{lp^s}-1\rangle}$ and $\frac{\mathbb{F}_{q}[x]}{\langle x^{lp^s}+1\rangle}$ that maps $c(x)$ to $c(-x)$, the minimum distance of the negacyclic code $C_{j}$ can be computed using the same strategies.

Define a set of $l$-length negacyclic codes $\{\overline{C}_{j,v}:0\leq v\leq p^s-1\}$ with respective generator polynomial $\overline{g}_{j,v}(x)=\prod_{t=0}^{e}(\hat{M}_{t}(-x))^{f_{v,a_{j,t}}}$,
where $f_{v,a_{j,t}}=\left\{\begin{array}{ll}1,& p^s-a_{j,t}>v,\\ 0, & \text{otherwise.}\end{array}\right.$ For $0\leq v\leq p^s-1$, denote by
$P_{v}$  the Hamming weight of the polynomial $(x-1)^{v}$~\cite{Castagnoli1991On} and define the set
\begin{equation}
V=\left\{\sum_{\mu=1}^{u-1}(p-1)p^{s-\mu}+\tau p^{s-u}:1\leq u\leq s, 1
\leq \tau\leq p-1\right\}\cup\{0\}.
\end{equation}
The minimum distance of $C_{j}$ is demonstrated in Table~\ref{tab1}.

\begin{table*}[tp]
  \scriptsize
  \caption{ The minimum distance  of an $lp^s$-length negacyclic code $C_{j}=\langle \prod_{t=0}^{e}(\hat{M}_{t}(-x))^{p^s-a_{j,t}}\rangle$ with $0\leq a_{j,t}\leq p^s$ for $0\leq t\leq e$ and $j\in\{2,4\}$.}
  \setlength{\belowcaptionskip}{6pt}
  \renewcommand\arraystretch{1.3}
  \begin{threeparttable}
  \begin{tabular}{p{0.5cm}<{\centering}|p{3cm}<{\centering}|p{3cm}<{\centering}|p{3cm}<{\centering}}
  \toprule
  \textbf{Case} & $\bm{b_{\rm{min}}}^{\dagger}$ & $\bm{b_{\rm{max}}}$ & \textbf{minimum distance}\\
  \midrule
  1  &  $(0, p^{s-1}]$ & $(0, p^{s-1}]$ & 2\\
  \hline
  2  &  $(0, p^{s-1}]$ & $(\beta^{\star} p^{s-1},(\beta+1)p^{s-1}]$  & $\text{min}\{2d'^{\ddagger},\beta+2\}$\\
  \hline
  3  &  $(0, p^{s-1}]$ & $(p^s-p^{s-\mu}+(\tau-1)p^{s-\mu-1}, p^s-p^{s-\mu}+\tau p^{s-\mu-1}]$ & $\text{min}\{2d',(\tau+1)p^{\mu}\}$\\
  \hline
     4&  $(\beta_{0}p^{s-1},(\beta_{0}+1)p^{s-1}]$ & $(\beta_{1}p^{s-1},(\beta_{1}+1)p^{s-1}]$ & $\text{min}\{(\beta_{0}+2)d',\beta_{1}+2\}$\\
  \hline
  5  &  $(\beta p^{s-1},(\beta+1)p^{s-1}]$ & $(p^s-p^{s-\mu}+(\tau-1)p^{s-\mu-1},p^s-p^{s-\mu}+\tau p^{s-\mu-1}]$ & $\text{min}\{(\beta+2)d',(\tau+1)p^{\mu}\}$\\
  \hline
  6  &  $(p^s-p^{s-\mu_{0}}+(\tau_{0}-1)p^{s-\mu_{0}-1},p^s-p^{s-u_{0}}+\tau_{0}p^{s-\mu_{0}-1}]$ & $(p^{s}-p^{s-\mu_{1}}+(\tau_{1}-1)p^{s-\mu_{1}-1},p^{s}-p^{s-\mu_{1}}+\tau_{1}p^{s-\mu_{1}-1}]$ & $\text{min}\{(\tau_{0}+1)p^{\mu_{0}}d',(\tau_{1}+1)p^{\mu_{1}}\}$\\
  \hline
  7 & $(0,p^{s-1}]$ & $p^s$ & $2d'$\\
  \hline
  8 & $(\beta p^{s-1},(\beta+1)p^{s-1}]$ & $p^s$ & $(\beta+2)d'$\\
  \hline
  9 & $(p^s-p^{s-\mu}+(\tau-1)p^{s-\mu-1},p^s-p^{s-\mu}+\tau p^{s-\mu-1}]$ & $p^s$ & $(\tau+1)p^{\mu}d'$\\
  \bottomrule
  \end{tabular}
  \begin{tablenotes}
  \item[$\dagger$] The notations $b_{\text{min}}$ and $b_{\text{max}}$ denote the minimum and maximum elements in the set $\{p^s-a_{j,t}:0\leq t\leq e\}$, respectively.
       \item[$\star$] The parameters are in the ranges
      $1\leq\beta,\beta_{0},\beta_{1}\leq p-2$, $1\leq \mu,\mu_{0},\mu_{1}\leq s-1$ and $1\leq \tau,\tau_{0},\tau_{1}\leq p-1$.
    \item[$\ddagger$] $d'$ denotes the minimum distance of $\overline{C}_{j,v'}$ where $P_{v'}=\text{min}\{P_{v}:b_{\text{min}}\leq v< b_{\text{max}},v\in V\}$.

  \end{tablenotes}
  \end{threeparttable}
  \label{tab1}
\end{table*}

Furthermore, set $l=3$ and $q\equiv 2(\text{mod } 3)$. Thus a $3p^s$-length negacyclic code
$C_{j}$ has a generator polynomial
\begin{equation}
%g_{1}(x)=(x-1)^{p^s-a_{1,0}}(x^2+x+1)^{p^s-a_{1,1}},\\
g_{j}(x)=(x+1)^{p^s-a_{j,0}}(x^2-x+1)^{p^s-a_{j,1}},
\end{equation}
where $0\leq a_{j,0},a_{j,1}\leq p^s$.
The $l$-length negacyclic code $\overline{C}_{j,v}$ has minimum distance
\begin{equation}
d(\overline{C}_{j,v})=\left\{
\begin{array}{ll}
1, &\text{if } p^s-a_{j,0}\leq v, p^s-a_{j,1}\leq v,\\
2, &\text{if } p^s-a_{j,0}>v, p^s-a_{j,1}\leq v,\\
3, &\text{if } p^s-a_{j,0}\leq v, p^s-a_{j,1}>v,\\
\infty, &\text{if } p^s-a_{j,0}>v, p^s-a_{j,1}>v.
\end{array}
\right.
\end{equation}
Hence if we assume that $p^s-a_{j,0}< p^s-a_{j,1}$, the minimum distance of $\overline{C}_{j,v'}$ is 3,  where $P_{v'}=\text{min}\{p^s-a_{j,0}\leq v<p^s-a_{j,1} : v\in V\}$. On that condition, Table~\ref{tab2} lists  sample parameters for $C_{j}$.

\begin{table}[tp]
  \scriptsize
  \centering
  \caption{Sample parameters for an $[n_{j},k_{j},d_{j}]$ negacyclic code $C_{j}=\langle (x+1)^{p^s-a_{j,0}}(x^2-x+1)^{p^s-a_{j,1}}\rangle$ with $a_{j,1}<a_{j,0}$ for $j\in\{2,4\}$.}
  \setlength{\belowcaptionskip}{6pt}
  \renewcommand\arraystretch{1.3}
  \begin{threeparttable}
  \begin{tabular}{p{0.9cm}<{\centering}p{0.9cm}<{\centering}p{0.9cm}<{\centering}p{0.9cm}<{\centering}p{0.9cm}<{\centering}p{0.9cm}<{\centering}p{0.9cm}<{\centering}}
  \toprule
  $\bm{p}$ & $\bm{s}$ & $n_{j}$ & $\bm{a_{j,0}}$ & $\bm{a_{j,1}}$ &  $k_{j}$ & $\bm{d_{j}}$\\
  \midrule
  5 & 2 & 75 & 19 & 18 & 55 & 3\\
  5 & 2 & 75 & 19 & 4 &  27 & 9 \\
  5 & 2 & 75 & 14 & 3 &  20 & 12 \\
  5 & 2 & 75 & 4 & 2 & 8 & 20 \\
  5 & 3 & 375 & 14 & 4 & 22& 50 \\
  5 & 3 & 375 & 9 & 3 & 15 & 75 \\
  5 & 4 & 1875 & 19 & 4 & 27 & 225 \\
  5 & 4 & 1875 & 9 & 3 & 15 & 375 \\
  11& 2 & 363 & 109 & 65 & 239 & 7 \\
  11& 2 & 363 & 21 & 10 & 41 & 33 \\
  11 & 2 & 363 & 10 & 6 & 22 & 66 \\
  11 & 3 & 3993 & 65 & 10 & 85 & 231\\
  11 & 3 & 3993 & 32 & 9 & 50 & 330\\
  23 & 2 & 1587 & 459 & 275 & 1009 & 13\\
  23 & 2 & 1587 & 229 & 22 & 273 & 45\\
  23 & 2 & 1587 & 45 & 22 & 89 & 69\\
  23 & 2 & 1587 & 21 & 16 & 53 & 184\\
  23 & 3 & 36501 & 68 & 21 & 110 & 1518 \\
  23 & 3 & 36501 & 45 & 21 & 87 & 1587\\
  \bottomrule
  \end{tabular}
  \label{tab2}
  \end{threeparttable}
\end{table}

\begin{table}[tp]
  \scriptsize
  \centering
  \caption{Sample parameters for an $[n,k,d]$ cyclic code $C_{i}\curlyvee C_{j}$ with $i\in \{1,3\}$ and $j\in\{2,4\}$.}
  \setlength{\belowcaptionskip}{6pt}
  \renewcommand\arraystretch{1.3}
  \begin{threeparttable}
  \begin{tabular}{p{0.5cm}<{\centering}|p{0.5cm}<{\centering}p{0.5cm}<{\centering}p{0.5cm}<{\centering}p{0.5cm}<{\centering}p{0.5cm}<{\centering}p{0.5cm}<{\centering}p{0.5cm}<{\centering}p{0.5cm}<{\centering}p{0.5cm}<{\centering}p{0.5cm}<{\centering}p{0.5cm}<{\centering}}
  \toprule
  \textbf{Case} & $\bm{p}$ & $\bm{s}$ & $\bm{n}$ & $\bm{a_{i,0}}$ & $\bm{a_{i,1}}$ & $\bm{a_{j,0}}$ & $\bm{a_{j,1}}$ & $\bm{d_{i}}^{\dagger}$ & $\bm{d_{j}}$ & $\bm{d}$ &$\bm{k}$ \\
  \midrule
   1 & 5 & 2 & 150 & 19 & 18 & 19 & 4 & 3 & 9 & 6 & 82\\
   2 & 5 & 2 & 150 & 19 & 4 & 14 & 3 & 9 & 12 & 12 & 47 \\
   3 & 5 & 2 & 150 & 19 & 4 & 4 & 2 & 9 & 20 & 18 & 35 \\
   4 & 5 & 2 & 150 & 4 & 2 & 14 & 3 & 20 & 12 & 20 & 28 \\
   5 & 5 & 3 & 750 & 14 & 4 & 9 & 3 & 50 & 75 & 75 & 37 \\
   6 & 5 & 4 & 3750 & 19 & 4 & 9 & 3 & 225 & 375 & 375 & 42 \\
   7 & 11& 2 & 726 & 109 & 65 & 21 & 10 & 7 & 33 & 14 & 280 \\
   8 & 11 & 2 & 726 & 21 & 10 & 10 & 6 & 33 & 66 & 66 & 63 \\
   9 & 11 & 3 & 7986 & 65 & 10 & 32 & 9 & 231 & 330 & 330 & 135 \\
   10 & 23 & 2 & 3174 & 459 & 275 & 229 & 22 & 13 & 45 & 26 & 1282 \\
   11 & 23 & 2 & 3174 & 229 & 22 & 45 & 22 & 45 & 69 & 69 & 362 \\
   12 & 23 & 2 & 3174 & 229 & 22 & 21 & 16 & 45 & 184 & 90 & 326 \\
   13 & 23 & 2 & 3174 & 45 & 22 & 21 & 16 & 69 & 184 & 138 & 142 \\
  \bottomrule
  \end{tabular}
  \begin{tablenotes}
  \item[$\dagger$] $d_{i}$ and $d_{j}$ denote the minimum distances of  $C_{i}$ and  $C_{j}$, respectively.
  \end{tablenotes}
  \end{threeparttable}
  \label{tab3}
\end{table}

Combined with the results of Ref.~\cite{Lan2018Non} regarding a
 $3p^s$-length cyclic code $C_{i}=\langle (x-1)^{p^s-a_{i,0}}(x^2+x+1)^{p^s-a_{i,1}}\rangle$ with $a_{i,1}<a_{i,0}$, the minimum distance of a $6p^s$-length cyclic code $C_{i}\curlyvee C_{j}$ on $(\bm{u}+\bm{v}|\bm{u}-\bm{v})$ construction can thus be  determined. Sample parameters are provided in Table~\ref{tab3}.

 We can tell that in  the cases 3, 7, 8, 10, 12 and 13 in Table~\ref{tab3}, $C_{i}\curlyvee C_{j}$ have minimum distances twice as large as the component cyclic codes $C_{i}$, for $i\in\{1,3\}$ and $j\in \{2,4\}$. As a consequence, the quantum synchronizable codes derived from $C_{i}\curlyvee C_{j}$ can correct Pauli errors of weight twice larger than those constructed from  $C_{i}$.
In many instances, the former codes also have better error-correcting capability against Pauli errors than the quantum synchronizable codes derived from non-primitive narrow-sense BCH codes~\cite{Fujiwara2013Algebraic}.
 Denote by $\delta$ the precise lower bound~\cite{Lan2018Non,Aly2007On} for the  minimum distance of a dual-containing BCH code. Table~\ref{tab4} lists some sample parameters of dual-containing non-primitive, narrow-sense BCH codes.

\begin{table}[tp]
 \scriptsize
 \centering
  \caption{Sample parameters for $p$-ary non-primitive, narrow-sense BCH codes.}
  \setlength{\belowcaptionskip}{6pt}
  \renewcommand\arraystretch{1.3}
\begin{tabular}{c|cccc}
\toprule
\textbf{Case} & $\bm{p}$ & \textbf{length} & $\bm{\delta}$ &\textbf{dimension}\\
\midrule
1 & 5 & 146 & 5 & 130\\
2 & 5 & 748 & 28 & 638\\
3 & 11 & 725 & 60 & 563 \\
4 & 11 & 7985 & 65 & 7749\\
5 &¡¡23 & 3172 & 132 & 2794\\
\bottomrule
\end{tabular}
\label{tab4}
\end{table}

By the comparison between Tables~\ref{tab3} and~\ref{tab4} we can see that, given the same base field $\mathbb{F}_{p}$,  repeated-root cyclic codes $C_{i}\curlyvee C_{j}$, with well-chosen parameters, can  possess  larger minimum distances than non-primitive, narrow sense BCH codes of close lengths. In particular, over the base fields $\mathbb{F}_{11}$ and $\mathbb{F}_{23}$, repeated-root cyclic codes $C_{i}\curlyvee C_{j}$ of lengths  726 and 3174, respectively, can reach a larger minimum distance provided that parameters $a_{i,0},a_{i,1},a_{j,0},a_{j,1}$ are sufficiently small. In that case, quantum sychronizable codes constructed from $C_{i}\curlyvee C_{j}$ have better performance in correcting Pauli errors than those from non-primitive, narrow-sense BCH codes.

\section{\label{sec:level1}The product construction}

Apart from the $(\bm{u}+\bm{v}|\bm{u}-\bm{v})$ method, the product construction is another useful technique of generating new cyclic codes from old ones. Without loss of generality, we restrict the following discussion to the binary case.

Let $C_{1}$ and $C_{2}$ be linear codes of parameters $[n_{1},k_{1},d_{1}]$ and $[n_{2},k_{2},d_{2}]$ respectively. A product code $C=C_{1}\otimes C_{2}$~\cite{Blahut2003Algebraic} is defined to be an $[n_{1}n_{2},k_{1}k_{2},d_{1}d_{2}]$ linear code whose codewords are all the two-dimensional arrays where each row is a codeword in $C_{1}$ and each column is a codeword in $C_{2}$.
Denote by $c_{i,j}$ the element in the $(i+1)^{\text{th}}$ row and $(j+1)^{\text{th}}$ column of the array, where $0\leq i\leq n_{1}-1$ and $0\leq j\leq n_{2}-1$.
Then a codeword of $C$ can be identified with a bivariate polynomial $c(y,z)=\sum_{i=0}^{n_{1}-1}\sum_{j=0}^{n_{2}-1}c_{i,j}y^{i}z^{j}\in\frac{\mathbb{F}_{2}[x,y]}{\langle (y^{n_{1}}-1)(z^{n_{2}}-1)\rangle}$.
 According to the Chinese remainder theorem, there exists a unique integer $\theta$ in the range $0\leq \theta\leq n_{1}n_{2}-1$ such that $\theta\equiv i(\text{mod } n_{1})$ and $\theta\equiv j (\text{mod } n_{2})$, provided that $\text{gcd}(n_{1},n_{2})=1$. In that case,  ${c}(x)=\sum_{\theta=0}^{n_{1}n_{2}-1}c_{(\theta(\text{mod } n_{1}), \theta(\text{mod } n_{2}))}x^{\theta}$ is also a polynomial representation of code $C$.

 Suppose that $C_{1}$ and $C_{2}$ are both cyclic. Then $C$ is also cyclic since $x{c}(x)(\text{mod } x^{n_{1}n_{2}}-1)$, which corresponds to $yzc(y,z)(\text{mod } y^{n_{1}}-1)(\text{mod } z^{n_{2}}-1)$, is a codeword of $C$. Denote by $g_{1}(x)$ and $g_{2}(x)$ the  respective generator polynomial of $C_{1}$ and $C_{2}$. Then $C$ and the dual code $C^{\perp}$ have respective  generator polynomial~\cite{Lin1970Further}
  \begin{equation}
  \begin{array}{l}
  g_{C}(x)=\text{gcd}(g_{1}(x^{\beta n_{2}})g_{2}(x^{\alpha n_{1}}),x^{n_{1}n_{2}}-1),\\
   g^{\perp}_{C}(x)=\text{gcd}(g_{1}^{\perp}(x^{\beta n_{2}}),g_{2}^{\perp}(x^{\alpha n_{1}})),
   \end{array}
    \end{equation}
   where $\alpha,\beta$ are integers satisfying $\alpha n_{1}+\beta n_{2}=1$, and $g_{1}^{\perp}(x)$ and $g_{2}^{\perp}(x)$ represent the respective generator polynomial of $C_{1}^{\perp}$
and $C_{2}^{\perp}$.
  Clearly,
  $C$ is self-orthogonal if either $C_{1}$ or $C_{2}$ is self-orthogonal.
  Applying cyclic product codes to Theorem~\ref{thm1}, we can then obtain a broad family of quantum synchronizable codes as follows.

 \begin{Theorem}

  Let $C_{1}=\langle g_{1}(x)\rangle$ be a self-orthogonal $[n_{1},k_{1}]$ cyclic code and $C_{2}=\langle g_{2}(x)\rangle$ be an $[n_{2},k_{2}]$ cyclic code with
   $\alpha n_{1}+\beta n_{2}=1$, where $\alpha,\beta$ are integers.
   Suppose that $C_{3}$ is a cyclic code
   with
   a generator polynomial $g_{3}(x)=g_{2}(x)\rho(x)$, where $\rho(x)$ is a non-trivial polynomial such that  $\rho(0)=1$. Assume that $\emph{gcd}(g_{3}^{\perp}(x),\rho^{*}(x))=1$ where $\rho^{*}(x)$ denotes the reciprocal polynomial of $\rho(x)$. Then for any non-negative pair $a_{l},a_{r}$ such that $a_{l}+a_{r}<\emph{ord}(\emph{gcd}(g_{1}^{\perp}(x^{\beta n_{2}}),\rho^{*}(x^{\alpha n_{1}})))$, there exists an $(a_{l},a_{r})-[[n_{1}n_{2},n_{1}n_{2}-2k_{1}k_{2}]]$ quantum synchronizable code.
   \label{thm7}
 \end{Theorem}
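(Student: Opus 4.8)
The plan is to apply the quantum synchronization framework of Theorem~\ref{thm1} to a nested pair of cyclic codes built from product codes. First I would set $C=(C_{1}\otimes C_{2})^{\perp}$ and $D=(C_{1}\otimes C_{3})^{\perp}$. The identity $\alpha n_{1}+\beta n_{2}=1$ forces $\gcd(n_{1},n_{2})=1$, so $C_{1}\otimes C_{2}$ and $C_{1}\otimes C_{3}$ are cyclic of length $n_{1}n_{2}$ and carry the Chinese-remainder polynomial representation recalled above; hence $C$ and $D$ are cyclic of length $n_{1}n_{2}$. Since $C_{1}$ is self-orthogonal, $C_{1}\otimes C_{2}$ is self-orthogonal, so $C\supseteq C^{\perp}=C_{1}\otimes C_{2}$ and $C$ is dual-containing. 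Because $g_{2}(x)\mid g_{3}(x)$ we have $C_{3}\subseteq C_{2}$, whence $C_{1}\otimes C_{3}\subseteq C_{1}\otimes C_{2}$ and therefore $C\subseteq D$. Finally $\dim(C_{1}\otimes C_{2})=k_{1}k_{2}$ gives $\dim C=n_{1}n_{2}-k_{1}k_{2}$, so the code produced by Theorem~\ref{thm1} will have $2\dim C-n_{1}n_{2}=n_{1}n_{2}-2k_{1}k_{2}$ logical qudits, exactly as claimed.

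It remains to identify the polynomial $f(x)$ of Theorem~\ref{thm1}, namely the quotient of $g_{C}(x)$ by $g_{D}(x)$, where $g_{C}(x)=\gcd(g_{1}^{\perp}(x^{\beta n_{2}}),g_{2}^{\perp}(x^{\alpha n_{1}}))$ and $g_{D}(x)=\gcd(g_{1}^{\perp}(x^{\beta n_{2}}),g_{3}^{\perp}(x^{\alpha n_{1}}))$ by the product-code dual formula cited from Ref.~\cite{Lin1970Further}. The key observation is that the parity-check polynomials satisfy $h_{2}(x)=h_{3}(x)\rho(x)$, and since $\rho(0)=1$ both factors have nonzero constant term, so the monic-reciprocal map is multiplicative here: $g_{2}^{\perp}(x)=g_{3}^{\perp}(x)\rho^{*}(x)$, hence $g_{2}^{\perp}(x^{\alpha n_{1}})=g_{3}^{\perp}(x^{\alpha n_{1}})\rho^{*}(x^{\alpha n_{1}})$. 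Writing $A=g_{1}^{\perp}(x^{\beta n_{2}})$, $B=g_{3}^{\perp}(x^{\alpha n_{1}})$, $R=\rho^{*}(x^{\alpha n_{1}})$, a comparison of multiplicities at each irreducible factor gives $\gcd(A,BR)=\gcd(A,B)\gcd(A,R)$ whenever $\gcd(B,R)=1$. The hypothesis $\gcd(g_{3}^{\perp}(x),\rho^{*}(x))=1$ is preserved under $x\mapsto x^{\alpha n_{1}}$ by a B\'ezout argument, so $\gcd(B,R)=1$ and
\begin{equation}
f(x)=g_{C}(x)/g_{D}(x)=\gcd(g_{1}^{\perp}(x^{\beta n_{2}}),\rho^{*}(x^{\alpha n_{1}})),
\end{equation}
which has nonzero constant term (again because $\rho(0)=1$), so $\text{ord}(f(x))$ is well defined.

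With $C\subset D$ and this $f$ in hand, Theorem~\ref{thm1} yields, for every pair of non-negative integers $a_{l},a_{r}$ with $a_{l}+a_{r}<\text{ord}(f(x))=\text{ord}(\gcd(g_{1}^{\perp}(x^{\beta n_{2}}),\rho^{*}(x^{\alpha n_{1}})))$, an $(a_{l},a_{r})-[[n_{1}n_{2},n_{1}n_{2}-2k_{1}k_{2}]]$ quantum synchronizable code, which is the assertion.

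I expect the main obstacle to be the gcd bookkeeping in the middle step: one has to verify that $g_{D}(x)$ genuinely divides $g_{C}(x)$ with quotient exactly $\gcd(A,R)$, and to handle carefully that in the formulas of Ref.~\cite{Lin1970Further} the substituted arguments $x^{\alpha n_{1}}$, $x^{\beta n_{2}}$ are to be read modulo $x^{n_{1}n_{2}}-1$ (one of $\alpha,\beta$ being negative), so that the divisibility and coprimality claims are statements about the genuine polynomial factors of $x^{n_{1}n_{2}}-1$. The remaining ingredients---self-orthogonality of product codes, the dimension count, and the invocation of Theorem~\ref{thm1}---are routine.
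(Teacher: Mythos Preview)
Your proof is correct and follows essentially the same route as the paper: set up the nested pair $(C_1\otimes C_2)^\perp\subset(C_1\otimes C_3)^\perp$, use the Lin product-dual formula to write both generator polynomials as gcd's, factor $g_2^\perp=g_3^\perp\rho^*$, and simplify the quotient to $\gcd(g_1^\perp(x^{\beta n_2}),\rho^*(x^{\alpha n_1}))$ before invoking Theorem~\ref{thm1}. You are in fact more careful than the paper at the gcd step---the paper asserts the simplification in one line without isolating the identity $\gcd(A,BR)=\gcd(A,B)\gcd(A,R)$ under $\gcd(B,R)=1$ or the preservation of coprimality under $x\mapsto x^{\alpha n_1}$---and your closing caveat about reading the substituted polynomials modulo $x^{n_1 n_2}-1$ is a genuine subtlety that the paper leaves implicit.
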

  \begin{proof}
  The cyclic  product code $D=C_{1}\otimes C_{3}$ and its dual code $D^{\perp}$ have generator polynomials
  \begin{equation}
  \begin{array}{l}
 g_{D}(x)=\text{gcd}(g_{1}(x^{\beta n_{2}})g_{3}(x^{\alpha n_{1}}),x^{n_{1}n_{2}}-1),\\
 g_{D}^{\perp}(x)=\text{gcd}(g_{1}^{\perp}(x^{\beta n_{2}}),g_{3}^{\perp}(x^{\alpha n_{1}})),
  \end{array}
  \end{equation}
  respectively. Denote by $C$ the product code of $C_{1}$ and $C_{2}$. Then the dual codes $C^{\perp}$ and $D^{\perp}$ are  dual-containing  cyclic codes such that $C^{\perp}\subset D^{\perp}$.
  Note that
  \begin{equation}
  g_{2}^{\perp}(x)=\left(\frac{x^{n_{2}}-1}{g_{2}(x)}\right)^{*}=\left(\frac{x^{n_{2}}-1}{g_{3}(x)}\rho(x)\right)^{*}=g_{3}^{\perp}(x)\rho^{*}(x).
  \end{equation}
  Hence the quotient polynomial $f(x)$ of the generator polynomials of $C^{\perp}$ and $D^{\perp}$ is given by
  \begin{equation}
  f(x)=\frac{g_{C}^{\perp}(x)}{g_{D}^{\perp}(x)}=\frac{\text{gcd}(g_{1}^{\perp}(x^{\beta n_{2}}),g_{3}^{\perp}(x^{\alpha n_{1}})\rho^{*}(x^{\alpha n_{1}}))}{\text{gcd}(g_{1}^{\perp}(x^{\beta n_{2}}),g_{3}^{\perp}(x^{\alpha n_{1}}))}=\text{gcd}(g_{1}^{\perp}(x^{\beta n_{2}}),\rho^{*}(x^{\alpha n_{1}})).
  \end{equation}
  Apply $C^{\perp}$ and $D^{\perp}$ to Theorem~\ref{thm1}, we can then obtain the quantum synchronizable code with the desired parameters.

  \hfill$\square$
 \end{proof}

 Note that the constraint $\text{gcd}(g_{3}^{\perp}(x),\rho^{*}(x))=1$ is equivalent with
\begin{equation}
\text{gcd}(h_{3}^{*}(x),\rho^{*}(x))=\text{gcd}(h_{3}(x),\rho(x))=\text{gcd}\left(\frac{h_{2}(x)}{\rho(x)},\rho(x)\right)=1,
\end{equation}
where $h_{i}(x)$ denotes the parity check polynomial of $C_{i}$ for $i\in\{2,3\}$.
Hence the non-trivial polynomial $\rho(x)$ can  always be  found, provided that $h_{2}(x)$ has at least two irreducible factors. On that condition, a broad range of $n_{2}$-length cyclic codes can be applied to the construction in Theorem~\ref{thm7}, which, accordingly, widen the family of quantum synchronizable codes to a large extent. In particular, the range of parameters' selection for quantum synchronization coding is greatly enlarged considering that  lengths of previous quantum synchronizable codes, apart from those built on repeated-root cyclic codes, are of limited forms, e.g., $2^s-1$~\cite{Fujiwara2013Algebraic} and $\frac{2^{st}-1}{2^s-1}$~\cite{Fujiwara2014Quantum}, where $s,t$ are positive integers.

Furthermore, the quantum synchronizable codes obtained from cyclic product codes can also reach the maximum synchronization error tolerance if
\begin{equation}
\text{ord}(\text{gcd}(g_{1}^{\perp}(x^{\beta n_{2}}),\rho^{*}(x^{\alpha n_{1}})))=\text{ord}(\text{gcd}(h_{1}(x^{\beta n_{2}}),\rho(x^{\alpha n_{1}})))=n_{1}n_{2},
\end{equation}
where $h_{1}(x^{\beta n_{2}})=\frac{x^{\beta n_{1}n_{2}}-1}{g_{1}(x^{\beta n_{2}})}$. For example, let $C_{1}$ be a $[7,3]$ cyclic code with a generator polynomial
\begin{equation}
g_{1}(x)=(x+1)(x^3+x+1).
 \end{equation}
 The dual code $C_{1}^{\perp}$ has a generator polynomial
 $ g_{1}^{\perp}(x)=x^3+x+1$.
  Clearly, $C_{1}$ is a self-orthogonal code. Let $C_{2}$ be a $[15,11]$ cyclic code with a generator polynomial
  \begin{equation}
  g_{2}(x)=x^4+x^3+1.
   \end{equation}
   Therefore, $C=C_{1}\otimes C_{2}$ is an $[105,33]$ cyclic code.
Note that
\begin{equation}
h_{2}(x)=(x+1)(x^2+x+1)(x^4+x+1)(x^4+x^3+x^2+x+1).
 \end{equation}
 By choosing $\rho(x)$ to be $x^4+x+1$, we can then obtain a $[15,7]$ cyclic code $C_{3}$ with a generator polynomial
 \begin{equation}
 g_{3}(x)=(x^4+x^3+1)(x^4+x+1).
  \end{equation}
  In that case, the cyclic product code $D=C_{1}\otimes C_{3}$ is an $[105,21]$ code.

Since $(-2)\times 7+1\times 15=1$, we have
\begin{equation}
\begin{array}{l}
g_{1}^{\perp}(x^{15})=x^{45}+x^{15}+1,\\
\rho^{*}(x^{-14})=x^{-56}+x^{-42}+1=x^{-56}(x^{56}+x^{14}+1).
\end{array}
\end{equation}
Their greatest common divisor is
\begin{equation}
\text{gcd}(g_{1}^{\perp}(x^{15}),f^{*}(x^{-14}))=x^{12}+x^{10}+x^9+x^7+x^6+x^4+1,
\end{equation}
which is of order 105. Hence following Theorem~\ref{thm7}, we can build an $[[105,39]]$ quantum synchronizable code that can tolerate misalignment by up to 105 qubits.

\section{\label{sec:level1}Conclusions}

In this paper, we present two  families of quantum synchronizable codes from cyclic codes built on the $(\bm{u}+\bm{v}|\bm{u}-\bm{v})$ construction and the product construction. In the former case,
most existing quantum synchronizable codes that provide the highest tolerance against synchronization errors can be generalized to larger cases. In particular,
 repeated-root codes of length $lp^s$ have been thoroughly investigated in quantum synchronization coding and can provide a better performance in correcting Pauli errors than  non-primitive, narrow-sense BCH codes. In the latter case, the loose restrictions on the component cyclic codes ensure a large augmentation of available quantum synchronizable codes. Besides, their synchronization error tolerance can also reach the maximum under certain circumstances.

\bibliographystyle{unsrt}
\bibliography{bib}

\end{document}